\documentclass[12pt, a4paper, reqno]{amsart}
\usepackage{mathrsfs}
\usepackage{bbm}
\usepackage{geometry}
\usepackage{amsmath,amscd,amssymb,latexsym}
\usepackage[all]{xy}
\usepackage{color}
\usepackage[section]{placeins}
\input xypic
\usepackage{enumerate}
\DeclareMathVersion{can}
\DeclareMathAlphabet{\can}{OT1}{cmss}{m}{n}
\newcommand{\T}{^{\top}}

\numberwithin{equation}{section}

\numberwithin{equation}{section}

\newtheorem{theorem}{Theorem}[section]
\newtheorem{lemma}{Lemma}[section]
\newtheorem{corollary}{Corollary}[section]
\theoremstyle{definition}
\newtheorem{remark}{Remark}[section]
\theoremstyle{definition}
\newtheorem{definition}{Definition}[section]
\theoremstyle{definition}
\newtheorem{example}{Example}[section]
\begin{document}
\title[  Additive non-RS MDS codes and ACD codes from
twisted Reed-Solomon codes]{New families of additive non-RS MDS codes and ACD codes from
twisted Reed-Solomon codes}
\author [Li] {Xia Li}
\thanks{Xia~Li is with College of Science, Nanjing University of Posts and Telecommunications, Hubei Key Laboratory of Applied Mathematics (Hubei University), Nanjing,  211100, P. R. China (e-mail: lixia4675601@163.com).}

\author[Wang]{Yu Wang}
\thanks{Yu Wang is with College of Science, Nanjing University of Posts and Telecommunications, Nanjing,  211100, P. R. China (e-mail: 2225075357@qq.com).}

\author[Wu]{Yansheng Wu}
\thanks{Yansheng Wu is with School of Computer Science, Nanjing University of Posts and Telecommunications, Nanjing,  211100, P. R. China (e-mail: yanshengwu@njupt.edu.cn).}

\thanks{The work of Xia Li was supported by  National Natural Science Foundation of China (No.
12301668) and Open Foundation of Hubei Key Laboratory of Applied Mathematics (Hubei University). The work of Yansheng Wu was supported in part by the National Natural
Science Foundation of China under Grant 62372247 and Grant 12101326,
 and in part by China Postdoctoral Science Foundation,
under Grant 2023M740958. }
\begin{abstract}
Additive codes generalize linear codes and play an important role in the construction of
quantum codes. As generalizations of linear MDS codes, additive MDS codes have received much research attention. As an important subclass of additive codes, ACD codes are widely used for quantum error correction. 
In this paper, we study single-twist additive generalized twisted Reed-Solomon (GTRS) codes over finite fields.
We derive necessary and sufficient conditions for these codes to be MDS for arbitrary hook and twist parameters.
Then we employ Schur squares to prove that such additive MDS codes are not monomially equivalent to additive Reed-Solomon codes.
We also identify a class of additive GTRS codes whose duals remain additive GTRS codes, and show that, for all sufficiently large prime powers $q$ whose characteristic avoids a finite set, such GTRS codes yield ACD codes.
\end{abstract}
\subjclass[2010]{94B05}
\keywords {Additive codes, additive MDS codes,  ACD codes, twisted Reed-Solomon codes}
\maketitle
\section{Introduction}Maximum distance separable (MDS) codes are optimal block codes that satisfy the equality condition of the Singleton bound.
The Hamming distance quantifies the error-detection and error-correction performance of error-correcting codes.
MDS codes meet the Singleton bound with equality and attain the largest possible minimum distance for fixed length and code size.
Singleton \cite{8} originally proposed and investigated MDS codes, motivated by their utility for constructing large constant-weight binary codes with considerable Hamming distances.
Reed and Solomon \cite{9} developed generalized Reed-Solomon (GRS) codes, the most extensively researched linear MDS code family to date.
Relying on strong burst error correction ability, GRS codes guarantee stable data transmission for optical discs and digital audio tapes.
In addition, they serve as fundamental building blocks for DNA error correction codes \cite{10,11} and locally recoverable codes deployed in distributed storage systems \cite{12,13}.
Besides GRS codes, there are other well-known constructions of linear MDS codes with the help of n-arcs in projective geometry \cite{14,15} and circulant Cauchy matrices \cite{16}.
MDS codes find practical uses in network coding, cryptography, data storage and quantum mechanics and they are closely connected with $n$-arcs and orthogonal arrays, motivating abundant studies on their constructions \cite{5,17,18,19,20}.

Additive codes generalize linear codes over finite fields, maintaining favorable algebraic structures despite being nonlinear objects and can be completely described in terms of a basis.
Calderbank et al. \cite{20} first investigated $\mathbb{F}_2$-additive codes over $\mathbb{F}_4$ as well as their duals with respect to the trace bilinear form, and further adopted these codes to construct quantum stabilizer codes. Bierbrauer and Edel later generalized the above theoretical framework to arbitrary finite fields \cite{21}.
Recently, Shi et al. \cite{22} pointed out that designing orthogonal direct-sum masking schemes only requires the additivity and complementarity of codes rather than linearity.
Such masking schemes are widely adopted to defend sensitive information against side-channel attacks (SCA) and fault injection attacks (FIA).
Accordingly, additive codes possessing complementary duals (ACD codes) can serve as countermeasures against both passive and active side-channel analysis in embedded cryptosystems.
Driven by this practical motivation, Shi et al. \cite{22} investigated $\mathbb{F}_2$-additive complementary dual codes over $\mathbb{F}_4$, and constructed $\mathbb{F}_4$-valued ACD codes from binary codes via the standard trace bilinear form as well as the Hermitian trace bilinear form.
In another recent work, Choi et al. \cite{23} provided methods to construct ACD codes over finite fields. 
They also listed some ACD codes with good parameters over $\mathbb{F}_4$, $\mathbb{F}_8$ and $\mathbb{F}_9$, and identified some MDS ACD codes among these codes.

In a recent work, Beelen et al. \cite{24} introduced and studied twisted Reed-Solomon (TRS) codes with one twist as a natural generalization of RS codes and showed that these codes are not MDS in general, and characterized two MDS subclasses of single-twist TRS codes. 
In another study \cite{25}, the authors pointed out that the dual codes of ordinary TRS codes usually do not belong to the TRS family, but a special type of TRS codes is closed under the dual operation; additionally, a kind of TRS codes can resist Sidelnikov-Shestakov and Wieschebrink attacks against the McEliece cryptosystem.
Subsequently, Beelen et al. \cite{5} generalized single-twist TRS codes to TRS codes with $\ell$ twists, constructed multiple MDS TRS subfamilies, and proved that plenty of TRS codes are not monomially equivalent to classical RS codes. 
Furthermore, Liu and Liu \cite{19} put forward construction approaches for MDS LCD codes over finite fields by virtue of generalized twisted Reed-Solomon (GTRS) codes equipped with $\ell$ twists.

Huffman extended the Singleton-type bound for additive codes in subsequent research in \cite{3}. 
As optimal codes achieving this bound, additive MDS codes correspond to geometric structures such as pseudo-arcs \cite{26} and serve as fundamental building blocks for quantum stabilizer codes \cite{27}. Ball et al. \cite{28} proved that any $\mathbb{F}_q$-additive MDS code with length $n$ and additive dimension $2k$ over $\mathbb{F}_{q^2}$ is linear over $\mathbb{F}_{q^2}$ under certain premises.
In recent work, Yadav and Sharma \cite{2} investigated new additive codes from generalized twisted Reed-Solomon codes (GTRS codes) and they obtained the necessary and sufficient conditions for the code to be MDS when $t=1$ and $0 \leq h \leq k-1$.
Jiayu Ma et al. \cite{1} derived the necessary and sufficient conditions for the code to be MDS when $t=2$ and $h = k-1$.
Thus, this paper focuses on constructing additive MDS codes via additive GTRS codes for general $h$ and $t$.

This paper is organized as follows.
In Section 2, we introduce the necessary foundational notions to derive our main results, which consist of additive codes, their Singleton bound, additive GRS code, additive GTRS code and the Schur product.
In Section 3, we construct a new class of additive non-RS MDS codes from twisted Reed-Solomon codes.
We also identify a class of additive GTRS codes whose duals remain additive GTRS codes.
In Section 4, we provide a construction for ACD codes based on additive GTRS codes.
In Section 5, we present a brief conclusion.

\section{Preliminary}
In this section, we list basic definitions needed for proving our main results.
Let $q$ be a prime power, $r\geq 2$ an integer and $n$ denote a positive integer. 
The set $\mathbb{F}_{q^r}^n$ consisting of all $n$-dimensional tuples with entries taken from $\mathbb{F}_{q^r}$ forms a vector space of dimension $nr$ over the ground field $\mathbb{F}_q$. 
The vector space operations here refer to component-wise addition together with scalar multiplication where scalars are selected from $\mathbb{F}_q$.
Two additive codes of length \(n\) over \(\mathbb{F}_{q^r}\) are said to be monomially equivalent if a generator matrix of one code can be obtained from the generator matrix of the other code by post-multiplying it with an \(n \times n\) monomial matrix over \(\mathbb{F}_{q^r}\).
It is a standard property that monomially equivalent codes possess identical triple parameters $n,k,d$.

An $\mathbb{F}_q$-additive code $\mathcal{C}$ of length $n$ over the extension field $\mathbb{F}_{q^r}$ is defined as an $\mathbb{F}_q$-linear subspace embedded inside the ambient space $\mathbb{F}_{q^r}^n$.
Throughout the subsequent discussion, we write ``additive code'' for \(\mathbb{F}_q\)-additive code over \(\mathbb{F}_{q^r}\).
We write $\dim_{\mathbb{F}_q}(\mathcal{C})$ to represent the dimension of the code subspace when viewed as a linear space over $\mathbb{F}_q$.
Correspondingly, a generator matrix for $\mathcal{C}$ is a matrix with entries in $\mathbb{F}_{q^r}$, whose rows form an \(\mathbb{F}_q\)-basis of \(C\).
For any nonzero codeword $\boldsymbol{c}\in\mathcal{C}$, $w_H(\boldsymbol{c})$ stands for its Hamming weight, namely the number of nonzero coordinates contained in the vector $\boldsymbol{c}$. Based on this definition of weight, the minimum Hamming distance of code $\mathcal{C}$ is defined as
\[
d_H(\mathcal{C}) = \min\left\{ w_H(\boldsymbol{c}) \mid \boldsymbol{c}\in\mathcal{C},\; \boldsymbol{c}\neq \boldsymbol{0} \right\}.
\]
If an additive code satisfies $\dim_{\mathbb{F}_q}\mathcal{C}=k$ and its minimum Hamming distance equals $d$, we call such a code an additive $[n,k,d]$ code over $\mathbb{F}_{q^r}$.

The trace inner product on $\mathbb{F}_{q^r}^n$ is defined as the mapping
\[
\langle \boldsymbol{u},\boldsymbol{v}\rangle = \sum_{i=1}^n \operatorname{Tr}(u_i v_i)
\]
for arbitrary vectors $\boldsymbol{u}=(u_1,u_2,\dots,u_n),\boldsymbol{v}=(v_1,v_2,\dots,v_n) \in \mathbb{F}_{q^r}^n$, where $\operatorname{Tr}$ stands for the standard trace mapping from $\mathbb{F}_{q^r}$ to $\mathbb{F}_q$. According to Lemma 1 in Huffman \cite{3}, this trace inner product is a symmetric and non-degenerate bilinear form over $\mathbb{F}_{q^r}^n$.
Given an additive code $\mathcal{C}$ with length $n$ over $\mathbb{F}_{q^r}$, its dual code $\mathcal{C}^\perp$ is defined by
\[
\mathcal{C}^\perp = \big\{\boldsymbol{u} \in \mathbb{F}_{q^r}^n \,\big|\, \langle \boldsymbol{u},\boldsymbol{c}\rangle = 0,\ \forall \boldsymbol{c}\in \mathcal{C}\big\}.
\]
It can be verified that $\mathcal{C}^\perp$ is also an $\mathbb{F}_q$-linear subspace of $\mathbb{F}_{q^r}^n$, so it forms another additive code. The generator matrix of $\mathcal{C}^\perp$ is referred to as the parity-check matrix of the original code $\mathcal{C}$. From Proposition 2.4 in reference \cite{4}, we have
\[
\dim_{\mathbb{F}_q}(\mathcal{C}) + \dim_{\mathbb{F}_q}(\mathcal{C}^\perp) = nr.
\]
We say an additive code $\mathcal{C}$ is an additive complementary dual (ACD) code when the intersection of $\mathcal{C}$ and its dual space only contains the zero vector, i.e., $\mathcal{C} \cap \mathcal{C}^\perp = \{\boldsymbol{0}\}$.

We then present the Singleton bound for additive codes from \cite{3}.
\begin{lemma}
\label{lemma2.1}
Suppose $\mathcal{C}$ is an additive-$[n,k,d]$ code defined over $\mathbb{F}_{q^r}$, then the inequality
\[
d \leq n - \left\lceil \frac{k}{r} \right\rceil + 1
\]
holds, where the symbol $\lceil \cdot \rceil$ stands for the ceiling function.
\end{lemma}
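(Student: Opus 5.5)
The plan is a puncturing (pigeonhole) argument on the first $n-d+1$ coordinates, modelled on the classical proof of the Singleton bound, with $\mathbb{F}_q$-dimensions in place of $\mathbb{F}_{q^r}$-dimensions. Let $\mathcal{C}$ be an additive $[n,k,d]$ code over $\mathbb{F}_{q^r}$, so $\mathcal{C}$ is an $\mathbb{F}_q$-subspace of $\mathbb{F}_{q^r}^n$ with $\dim_{\mathbb{F}_q}\mathcal{C}=k$, and $|\mathcal{C}|=q^k$.

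Consider the projection map
\[
\pi:\mathcal{C}\to \mathbb{F}_{q^r}^{\,n-d+1},\qquad (c_1,\dots,c_n)\mapsto (c_1,\dots,c_{n-d+1}),
\]
which keeps the first $n-d+1$ coordinates. Since addition and $\mathbb{F}_q$-scalar multiplication act coordinatewise, $\pi$ is $\mathbb{F}_q$-linear. Suppose $\pi(\boldsymbol{c})=\boldsymbol{0}$ for some $\boldsymbol{c}\in\mathcal{C}$. Then $\boldsymbol{c}$ is supported on the last $d-1$ coordinates, so $w_H(\boldsymbol{c})\le d-1$. By the definition of $d_H(\mathcal{C})$, this forces $\boldsymbol{c}=\boldsymbol{0}$. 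Hence $\pi$ is injective.

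Injectivity gives the dimension count
\[
k=\dim_{\mathbb{F}_q}\mathcal{C}\le \dim_{\mathbb{F}_q}\mathbb{F}_{q^r}^{\,n-d+1}=r(n-d+1).
\]
Equivalently, $q^k\le q^{r(n-d+1)}$. Dividing by $r$ yields $n-d+1\ge k/r$. Since $n-d+1$ is an integer, this improves to $n-d+1\ge \lceil k/r\rceil$, which rearranges to the claimed bound $d\le n-\lceil k/r\rceil+1$.

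There is no real obstacle here. The only step needing care is the last one, passing from $k/r$ to the ceiling, which uses exactly the integrality of $n-d+1$. The degenerate cases $k=0$ and $d=1$ are covered by the same argument without change.
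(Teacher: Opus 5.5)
Your argument is correct: the projection onto the first $n-d+1$ coordinates is $\mathbb{F}_q$-linear and injective, which gives $k\le r(n-d+1)$, and the integrality of $n-d+1$ turns $k/r$ into $\lceil k/r\rceil$. The paper does not prove this lemma itself but quotes it from Huffman, and your puncturing argument is the standard Singleton-bound proof behind that citation. One small correction: the case $k=0$ is not ``covered without change''. The zero code has no nonzero codeword, so $d$ is undefined. The lemma tacitly assumes $k\ge 1$, and that assumption is also what guarantees $1\le d\le n$ in your projection step.
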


We name $\mathcal{C}$ an additive MDS code when its minimum distance satisfies $d = n - \left\lceil \dfrac{k}{r} \right\rceil + 1$.
Correspondingly, $\mathcal{C}$ is referred to as an additive almost MDS code provided $d = n - \left\lceil \dfrac{k}{r} \right\rceil$.

We fix a notation for element-wise trace operation on matrices for the whole paper.
Given an $m \times \ell$ matrix $A$ defined over $\mathbb{F}_{q^r}$, let its $(i,j)$-th entry be $a_{i,j}$. We use $\operatorname{Tr}(A)$ to represent an $m\times\ell$ matrix with entries in $\mathbb{F}_q$, where the $(i,j)$-th entry equals $\operatorname{Tr}(a_{i,j})$ for all $1 \le i \le m$, $1 \le j \le \ell$. It should be emphasized that this matrix trace operation is different from the standard matrix trace in linear algebra.

Suppose an additive code $\mathcal{C}$ over $\mathbb{F}_{q^r}$ has generator matrix $G$ and parity-check matrix $H$.
The identity $\operatorname{Tr}(GH\T)=\mathbf{0}$ holds necessarily. 
We write $A\T$ for the transpose of matrix $A$.

\begin{lemma}
\label{lemma2.2}(\cite{2})
Let $C$ be an additive code of length $n$ over $\mathbb{F}_{q^r}$ with a generator matrix $G$. Then the code $C$ is ACD if and only if the matrix $\operatorname{Tr}\bigl(GG\T\bigr)$ is non-singular.
\end{lemma}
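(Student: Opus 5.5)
The plan is to recast the ACD condition as the triviality of the kernel of an $\mathbb{F}_q$-bilinear form restricted to $\mathcal{C}$. Let $G$ be the $k\times n$ generator matrix of $C$, with rows $g_1,\dots,g_k$ forming an $\mathbb{F}_q$-basis of $C$. The $(i,j)$ entry of $\operatorname{Tr}(GG\T)$ is $\sum_{l=1}^n \operatorname{Tr}(g_{i,l}g_{j,l})=\langle g_i,g_j\rangle$, so $\operatorname{Tr}(GG\T)$ is exactly the Gram matrix over $\mathbb{F}_q$ of the trace inner product restricted to $C$, in the basis $g_1,\dots,g_k$. This identification is the key observation; what follows is linear algebra over $\mathbb{F}_q$.

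First direction: suppose $\operatorname{Tr}(GG\T)$ is singular. Choose a nonzero $x=(x_1,\dots,x_k)\in\mathbb{F}_q^k$ with $x\operatorname{Tr}(GG\T)=\mathbf{0}$, and set $c=\sum_i x_i g_i$. Then $c$ is nonzero because the $g_i$ are $\mathbb{F}_q$-independent. Since $\operatorname{Tr}$ is $\mathbb{F}_q$-linear and $x_i\in\mathbb{F}_q$, we get $\langle c,g_j\rangle=\sum_i x_i\langle g_i,g_j\rangle=0$ for every $j$. By bilinearity over $\mathbb{F}_q$, $c$ is then orthogonal to every element of $C$. Hence $0\neq c\in C\cap C^\perp$, and $C$ is not ACD.

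Converse: suppose $0\neq c\in C\cap C^\perp$. Write $c=xG$ with $x\in\mathbb{F}_q^k$ nonzero. Orthogonality of $c$ to each $g_j$ gives $x\operatorname{Tr}(GG\T)=\mathbf{0}$, so the matrix is singular. Combining the two directions, $C$ is ACD if and only if $\operatorname{Tr}(GG\T)$ is nonsingular.

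The only point needing care, and the closest thing to an obstacle, is the scalar issue. The identity $\operatorname{Tr}\bigl((xG)G\T\bigr)=x\operatorname{Tr}(GG\T)$ holds because the coefficients $x_i$ lie in $\mathbb{F}_q$, where $\operatorname{Tr}$ is linear. For coefficients in $\mathbb{F}_{q^r}$ this would fail, which is exactly why the statement is phrased with an $\mathbb{F}_q$-basis as rows of $G$. Non-degeneracy of the form (Lemma 1 of Huffman) is not actually needed for this equivalence; only bilinearity over $\mathbb{F}_q$ and the independence of the rows are used.
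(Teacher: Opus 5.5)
Your argument is correct and complete. The paper gives no proof of this lemma; it quotes it from Yadav and Sharma, so there is no in-paper argument to compare against.

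What you wrote is the standard proof. $\operatorname{Tr}(GG\T)$ is the Gram matrix of the $\mathbb{F}_q$-bilinear trace form restricted to $C$, in the basis formed by the rows of $G$. The map $x\mapsto xG$ with $x\in\mathbb{F}_q^k$ then identifies its left kernel exactly with $C\cap C^{\perp}$. This uses only $\mathbb{F}_q$-bilinearity and the $\mathbb{F}_q$-independence of the rows.

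You are also right that non-degeneracy of the ambient trace form is not needed.
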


To define additive GRS codes, let \(n, k\) and \(r \ge 2\) be integers satisfying \(1 \le k \le nr\). Let \(\boldsymbol{\alpha} = (\alpha_1, \alpha_2, \dots, \alpha_n) \in \mathbb{F}_{q^r}^n\) and \(\boldsymbol{v} = (v_1, v_2, \dots, v_n) \in (\mathbb{F}_{q^r}^*)^n\), where \(\alpha_1, \alpha_2, \dots, \alpha_n\) are distinct. 
Consider the evaluation map \(\mathcal{E}_{\boldsymbol{\alpha}, \boldsymbol{v}}: \mathbb{F}_q[x]_{<k} \to \mathbb{F}_{q^r}^n\), defined as
\[
\mathcal{E}_{\boldsymbol{\alpha}, \boldsymbol{v}}(f(x)) = (v_1f(\alpha_1), v_2f(\alpha_2), \dots, v_nf(\alpha_n)) \quad \text{for all } f(x) \in \mathbb{F}_q[x]_{<k}.
\]
One can easily see that \(\mathcal{E}_{\boldsymbol{\alpha}, \boldsymbol{v}}\) is an \(\mathbb{F}_q\)-linear vector space homomorphism, and hence its image \(\mathcal{E}_{\boldsymbol{\alpha}, \boldsymbol{v}}(\mathbb{F}_q[x]_{<k})\) is an \(\mathbb{F}_q\)-linear subspace of \(\mathbb{F}_{q^r}^n\). 
The additive generalized Reed-Solomon (additive GRS) code of length \(n\) over \(\mathbb{F}_{q^r}\) with evaluation points \(\alpha_1, \alpha_2, \dots, \alpha_n\) and column multipliers \(v_1, v_2, \dots, v_n\) is defined as
\[
\mathcal{ARS}_k(\boldsymbol{\alpha}, \boldsymbol{v}) = \mathcal{E}_{\boldsymbol{\alpha}, \boldsymbol{v}}(\mathbb{F}_q[x]_{<k}) = \{(v_1f(\alpha_1), v_2f(\alpha_2), \dots, v_nf(\alpha_n)) \mid f(x) \in \mathbb{F}_q[x]_{<k}\}.
\]
In particular, when \(\boldsymbol{v} = (1, 1, \dots, 1)\), this code is referred to as an additive RS code, i.e., \(\mathcal{ARS}_k(\boldsymbol{\alpha})\). 
Note that the evaluation map \(\mathcal{E}_{\boldsymbol{\alpha}, \boldsymbol{v}}\) is not always injective under the mere assumption that \(\alpha_i\) are distinct. When the map \(\mathcal{E}_{\boldsymbol{\alpha}, \boldsymbol{v}}\) is injective, the code \(\mathcal{ARS}_k(\boldsymbol{\alpha}, \boldsymbol{v})\) has \(\mathbb{F}_q\)-dimension exactly \(k\). 

In this paper the injectivity is never an issue: whenever we use the code $\mathcal{ARS}_k(\boldsymbol{\alpha},\boldsymbol{v})$ or the code $\mathscr{T}_k(\boldsymbol{\alpha},\boldsymbol{v},\mathbf{t},\mathbf{h},\boldsymbol{\eta})$, the evaluation points $\alpha_1,\dots,\alpha_n$ lie in pairwise distinct full-degree Frobenius orbits and $k\le nr$, so that the $nr$ elements $\alpha_i^{q^j}$ are pairwise distinct and a nonzero polynomial of degree less than $nr$ over $\mathbb{F}_q$ cannot vanish at all of them.

Linear GTRS codes are well-known constructions for non-RS MDS codes. In parallel with this framework, we introduce the definition of additive GTRS codes.
Let $\ell$ be a positive integer, pick
\[
\mathbf{t}=(t_1,t_2,\dots,t_\ell) \in \{1,2,\dots,nr-k\}^\ell,\quad
\mathbf{h}=(h_1,h_2,\dots,h_\ell) \in \{0,1,\dots,k-1\}^\ell,
\]
where every pair $(h_j,t_j)$ is mutually distinct for $1\le j\le \ell$. Let $\boldsymbol{\eta}=(\eta_1,\eta_2,\dots,\eta_\ell)\in\mathbb{F}_q^\ell$.
Here $\ell$ represents the total number of twist terms, $\mathbf{t}$ is the twist parameter vector, $\mathbf{h}$ denotes the hook vector, and $\boldsymbol{\eta}$ corresponds to the coefficient vector of twisted monomials.
Define this polynomial subspace as
\[
\mathscr{P}_{n,k}(\mathbf{t},\mathbf{h},\boldsymbol{\eta}) =
\left\{
\sum_{i=0}^{k-1} a_i x^i + \sum_{j=1}^{\ell} \eta_j a_{h_j} x^{k-1+t_j} \,\bigg|\, a_i\in\mathbb{F}_q
\right\}
\subseteq \mathbb{F}_q[x].
\]

By Lemma 1 of Beelen  \cite{5}, we see that the set $\mathscr{P}_{n,k}(\mathbf{t},\mathbf{h},\boldsymbol{\eta})$ is a $k$-dimensional subspace of $\mathbb{F}_q[x]$ over $\mathbb{F}_q$ with a basis set $\{p_0(x),p_1(x),\dots,p_{k-1}(x)\}$, where
\[
p_i(x) = x^i + \sum_{\substack{j=1\\ h_j=i}}^\ell \eta_j x^{k-1+t_j} \quad \text{for } 0 \le i \le k-1.
\]

Based on the above polynomial set, additive GTRS codes of length $n$ over $\mathbb{F}_{q^r}$ with $\ell$ twists and evaluation points $\boldsymbol{\alpha}$ are defined by
\[
\mathscr{T}_k(\boldsymbol{\alpha},\boldsymbol{v},\mathbf{t},\mathbf{h},\boldsymbol{\eta})
= \big\{ \big(v_1 f(\alpha_1),v_2 f(\alpha_2),\dots,v_n f(\alpha_n)\big) \,\big|\, f(x) \in \mathscr{P}_{n,k}(\mathbf{t},\mathbf{h},\boldsymbol{\eta}) \big\}.
\]
If we take $\boldsymbol{v}=(1,1,\dots,1)$, the resulting code is named an additive TRS code and marked with the notation $\mathscr{T}_k(\boldsymbol{\alpha},\mathbf{t},\mathbf{h},\boldsymbol{\eta})$.

\begin{definition}
\label{definition2.1}
Given two arbitrary vectors $\boldsymbol{x} = (x_1,x_2,\dots,x_n)$ and $\boldsymbol{y} = (y_1,y_2,\dots,y_n)$ belonging to the vector space $\mathbb{F}_{q^r}^n$, their Schur product is formulated component-wise as:
\[
\boldsymbol{x} \star \boldsymbol{y} = \left(x_1 y_1,\, x_2 y_2,\, \dots,\, x_n y_n\right).
\]
\end{definition}
Consider two additive codes $\mathcal{C}_1,\mathcal{C}_2$ which are subsets embedded in $\mathbb{F}_{q^r}^n$. The Schur product space generated by $\mathcal{C}_1$ and $\mathcal{C}_2$ is constructed by taking the $\mathbb{F}_q$-linear span of all component-wise Schur products computed from codewords picked separately from $\mathcal{C}_1$ and $\mathcal{C}_2$, whose mathematical expression is written as
\[
\mathcal{C}_1 \star \mathcal{C}_2 := \operatorname{span}_{\mathbb{F}_q}\big\{\,\boldsymbol{c}_1 \star \boldsymbol{c}_2 \,\big|\, \boldsymbol{c}_1 \in \mathcal{C}_1,\,\boldsymbol{c}_2 \in \mathcal{C}_2\big\}.
\]

\begin{lemma}
\label{lemma2.3}(\cite{2})
\begin{enumerate}
  \item  If $\mathcal{C}$ is an additive code of length $n$ and dimension $k$ over $\mathbb{F}_{q^r}$, then the dimension of its Schur square $\mathcal{C}^2$ satisfies
$
\dim_{\mathbb{F}_q}(\mathcal{C}^2) \leq \min\left\{nr,\,\frac{k(k+1)}{2}\right\}.
$
  \item If $\mathcal{C}$ and $\mathcal{D}$ are two monomially equivalent additive codes over $\mathbb{F}_{q^r}$, then $\dim_{\mathbb{F}_q}(\mathcal{C}^2) = \dim_{\mathbb{F}_q}(\mathcal{D}^2)$.
\end{enumerate}

\end{lemma}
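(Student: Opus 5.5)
Part (1) rests on bilinearity of the Schur product over $\mathbb{F}_q$. Fix an $\mathbb{F}_q$-basis $\boldsymbol{g}_1,\dots,\boldsymbol{g}_k$ of $\mathcal{C}$, for instance the rows of a generator matrix. For arbitrary codewords $\boldsymbol{c}=\sum_i a_i\boldsymbol{g}_i$ and $\boldsymbol{c}'=\sum_j b_j\boldsymbol{g}_j$ with $a_i,b_j\in\mathbb{F}_q$, the Schur product is $\mathbb{F}_q$-bilinear, so
\[
\boldsymbol{c}\star\boldsymbol{c}'=\sum_{i,j}a_ib_j\,\boldsymbol{g}_i\star\boldsymbol{g}_j .
\]
Hence $\mathcal{C}^2=\mathcal{C}\star\mathcal{C}$ is spanned over $\mathbb{F}_q$ by the products $\boldsymbol{g}_i\star\boldsymbol{g}_j$. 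Since $\star$ is commutative, only pairs with $i\le j$ are needed, and there are $\binom{k}{2}+k=k(k+1)/2$ of them. Separately, $\mathcal{C}^2$ is an $\mathbb{F}_q$-subspace of $\mathbb{F}_{q^r}^n$, which has $\mathbb{F}_q$-dimension $nr$. Together these give $\dim_{\mathbb{F}_q}(\mathcal{C}^2)\le\min\{nr,k(k+1)/2\}$.

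For part (2), let $\mathcal{D}=\mathcal{C}M$ with $M=PD$ a monomial matrix, where $P$ is a permutation matrix and $D=\mathrm{diag}(d_1,\dots,d_n)$ has entries $d_i\in\mathbb{F}_{q^r}^*$. I would check that $\boldsymbol{x}\mapsto\boldsymbol{x}M$ is an $\mathbb{F}_q$-linear bijection of $\mathbb{F}_{q^r}^n$ taking $\mathcal{C}$ onto $\mathcal{D}$. The key identity is
\[
(\boldsymbol{x}PD)\star(\boldsymbol{y}PD)=(\boldsymbol{x}\star\boldsymbol{y})PD^2 .
\]
This holds because permuting coordinates commutes with the componentwise product, and scaling coordinate $i$ by $d_i$ in both factors scales the product by $d_i^2$.

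By this identity, every generator $\boldsymbol{d}_1\star\boldsymbol{d}_2$ of $\mathcal{D}^2$ lies in $\mathcal{C}^2PD^2$. Taking $\mathbb{F}_q$-spans, and using $\mathbb{F}_q$-linearity of right multiplication by $PD^2$, gives $\mathcal{D}^2=\mathcal{C}^2PD^2$. The matrix $PD^2$ is again monomial, so in particular invertible. Right multiplication by it is an $\mathbb{F}_q$-linear isomorphism, so $\dim_{\mathbb{F}_q}\mathcal{D}^2=\dim_{\mathbb{F}_q}\mathcal{C}^2$.

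There is no real obstacle in either part. The only point needing care is that the scalars are $\mathbb{F}_q$ while the monomial entries are in $\mathbb{F}_{q^r}$. This causes no issue, because multiplication by elements of $\mathbb{F}_{q^r}$ is still $\mathbb{F}_q$-linear.
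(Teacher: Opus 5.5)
Your proof is correct in both parts. In part (1), $\mathbb{F}_q$-bilinearity and commutativity show that the $k(k+1)/2$ products $\boldsymbol{g}_i\star\boldsymbol{g}_j$ with $i\le j$ span $\mathcal{C}^2$. In part (2), the identity $(\boldsymbol{x}PD)\star(\boldsymbol{y}PD)=(\boldsymbol{x}\star\boldsymbol{y})PD^2$ makes $\mathcal{D}^2$ the image of $\mathcal{C}^2$ under the $\mathbb{F}_q$-linear automorphism given by right multiplication by $PD^2$. You should state the reverse inclusion $\mathcal{C}^2PD^2\subseteq\mathcal{D}^2$ explicitly; it follows at once from the same identity.

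The paper gives no proof of this lemma and simply cites \cite{2}, so there is no in-paper argument to compare against; your argument is the standard one.
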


\begin{lemma}
\label{lemma2.4}(\cite{2})
Let $n,k$ and $r \geq 2$ be positive integers satisfying $1 \leq k \leq nr$. Let $\boldsymbol{\alpha} = (\alpha_1, \alpha_2, \dots, \alpha_n) \in \mathbb{F}_{q^r}^n$, where no two elements among $\alpha_1, \alpha_2, \dots, \alpha_n$ form a conjugate pair over $\mathbb{F}_q$ and each $\alpha_i$ has exactly $r$ distinct conjugates over $\mathbb{F}_q$. We have
\[
\dim_{\mathbb{F}_q}\big(\mathcal{ARS}_{k}(\boldsymbol{\alpha},1)^2\big) = \min\left\{nr, 2k - 1\right\}.
\]
\end{lemma}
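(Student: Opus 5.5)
The key observation is that the Schur square of the additive RS code is again an additive RS code, namely $\mathcal{ARS}_k(\boldsymbol{\alpha},1)^2=\mathcal{ARS}_{2k-1}(\boldsymbol{\alpha},1)$. Its dimension is then read off from the injectivity of the evaluation map on polynomials of bounded degree.

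First I will identify the square. The code $\mathcal{ARS}_k(\boldsymbol{\alpha},1)$ is spanned over $\mathbb{F}_q$ by the evaluation vectors $\boldsymbol{e}_i=(\alpha_1^i,\dots,\alpha_n^i)$ for $0\le i\le k-1$. Since the Schur product is $\mathbb{F}_q$-bilinear, the square is spanned by the products $\boldsymbol{e}_i\star\boldsymbol{e}_j=\boldsymbol{e}_{i+j}$. As $i+j$ ranges over $\{0,1,\dots,2k-2\}$, this gives
\[
\mathcal{ARS}_k(\boldsymbol{\alpha},1)^2=\operatorname{span}_{\mathbb{F}_q}\{\boldsymbol{e}_0,\dots,\boldsymbol{e}_{2k-2}\}=\mathcal{E}_{\boldsymbol{\alpha},\mathbf{1}}\bigl(\mathbb{F}_q[x]_{<2k-1}\bigr).
\]

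Second, I will compute the dimension of this image. Suppose $f\in\mathbb{F}_q[x]$ satisfies $f(\alpha_i)=0$ for all $i$. Because $f$ has coefficients in $\mathbb{F}_q$, it also vanishes at every conjugate $\alpha_i^{q^j}$. By hypothesis these $nr$ elements are pairwise distinct: each $\alpha_i$ has exactly $r$ conjugates, and no two of the $\alpha_i$ are conjugate. Hence the kernel of evaluation on $\mathbb{F}_q[x]$ is the ideal generated by
\[
M(x)=\prod_{i=1}^n\prod_{j=0}^{r-1}\bigl(x-\alpha_i^{q^j}\bigr),
\]
which lies in $\mathbb{F}_q[x]$ and has degree exactly $nr$.

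Two cases follow. If $2k-1\le nr$, the evaluation map is injective on $\mathbb{F}_q[x]_{<2k-1}$, because a nonzero multiple of $M$ has degree at least $nr$. The dimension is then $2k-1$. If $2k-1>nr$, then $\mathbb{F}_q[x]_{<2k-1}$ contains $\mathbb{F}_q[x]_{<nr}$, on which evaluation is injective. Its image is therefore the whole image of $\mathbb{F}_q[x]$, which is isomorphic to $\mathbb{F}_q[x]/(M)$ and has dimension $nr$. In both cases the dimension is $\min\{nr,2k-1\}$.

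There is no serious obstacle. The only point needing care is the kernel description: it should be stated via the minimal polynomials of the $\alpha_i$, so that the closure under conjugation giving $nr$ distinct roots is explicit.
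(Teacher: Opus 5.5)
Your proof is correct. The paper does not prove this lemma itself; it quotes it from \cite{2}, so there is no in-paper argument to compare against. Your route is the natural one and uses exactly the tools the paper relies on elsewhere.

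Your identification $\mathcal{ARS}_k(\boldsymbol{\alpha},1)^2=\mathcal{E}_{\boldsymbol{\alpha},\mathbf{1}}(\mathbb{F}_q[x]_{<2k-1})$ is the same degree-sumset mechanism that underlies Lemma~\ref{lemma2.7}. For the untwisted degree set $\{0,\dots,k-1\}$, the sums below $nr$ form $\{0,\dots,\min\{2k-2,nr-1\}\}$, which has exactly $\min\{nr,2k-1\}$ elements. Your injectivity step is the paper's standing remark in Section~2: a nonzero polynomial over $\mathbb{F}_q$ of degree $<nr$ cannot vanish at all $nr$ pairwise distinct conjugates $\alpha_i^{q^j}$.

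One small simplification is available in the case $2k-1>nr$. You do not need to describe the full image of $\mathbb{F}_q[x]$ as $\mathbb{F}_q[x]/(M)$. The image of $\mathbb{F}_q[x]_{<nr}$ already has $\mathbb{F}_q$-dimension $nr=\dim_{\mathbb{F}_q}\mathbb{F}_{q^r}^n$, so it is the whole ambient space, and the upper bound is automatic.
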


\begin{lemma}
\label{lemma2.5}
Let $\mathcal{C} \subseteq \mathbb{F}_{q^r}^n$ be an additive code satisfying $\dim_{\mathbb{F}_q}(\mathcal{C}) \leq \dfrac{nr}{2}$. If
$
\dim_{\mathbb{F}_q}\big(\mathcal{C}^2\big) \neq 2\dim_{\mathbb{F}_q}(\mathcal{C}) - 1,
$
then $\mathcal{C}$ is not monomially equivalent to the additive RS code $\mathcal{ARS}_k(\boldsymbol{\alpha},1)$, where no two elements among $\alpha_1, \alpha_2, \dots, \alpha_n$ form a conjugate pair over $\mathbb{F}_q$ and each $\alpha_i$ has exactly $r$ distinct conjugates over $\mathbb{F}_q$.

\begin{proof}
Assume to the contrary that $\mathcal{C}$ is monomially equivalent to the additive RS code $\mathcal{ARS}_k(\boldsymbol{\alpha},1)$. Let $k=\dim_{\mathbb{F}_q}(\mathcal{C})$.
From the known condition $k\leq \dfrac{nr}{2}$, we get $2k-1\leq nr-1$. According to lemma \ref{lemma2.4},
\[
\dim_{\mathbb{F}_q}\big(\mathcal{ARS}_k(\boldsymbol{\alpha},1)^2\big)=2k-1.
\]
Lemma \ref{lemma2.3} part (b) tells us that monomial equivalent codes have the same dimension for their Schur squares. Hence $\dim_{\mathbb{F}_q}(\mathcal{C}^2)=2k-1$.
However, our given condition is $\dim_{\mathbb{F}_q}(\mathcal{C}^2)\neq 2k-1$. This leads to a contradiction.
Thus, $\mathcal{C}$ is not monomially equivalent to $\mathcal{ARS}_k(\boldsymbol{\alpha},1)$.
\end{proof}
\end{lemma}

Suppose $I = \{i_1,i_2,\dots,i_\ell\} \subseteq \mathbb{Z}$ is a finite set of integers ordered as $i_1 < i_2 < \dots < i_\ell$. We call $I$ an arithmetic progression if the gap $i_j - i_{j-1}$ between adjacent numbers remains unchanged for every integer $j$ satisfying $2 \le j \le \ell$. Otherwise, $I$ is called a non-arithmetic progression. Notice that any set with one or two elements is automatically an arithmetic progression.

Let $I$ be an integer set with $\ell$ elements. Its sumset is defined by
\[
I + I = \{a + b \mid a,b \in I\}.
\]

\begin{lemma}
\label{lemma2.6}(\cite{6})
Let $I$ be a set with $l$ integers. Then $|I+I| \geq 2l-1$. Moreover, if $I$ is a set with $l$ integers such that $|I+I| = 2l-1$, then $I$ is an arithmetic progression.
It implies that the cardinality of $I+I$ is no less than $2l$ for any non-arithmetic progression $I$.
\end{lemma}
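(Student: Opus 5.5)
The statement to prove is Lemma \ref{lemma2.6}: for an $l$-element integer set $I$ we have $|I+I|\ge 2l-1$, equality forces $I$ to be an arithmetic progression, and consequently $|I+I|\ge 2l$ whenever $I$ is not an arithmetic progression. I will work with the ordering $i_1<i_2<\dots<i_l$ throughout.

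\textbf{The lower bound.} I will exhibit $2l-1$ distinct sums by a strictly increasing chain:
\[
2i_1<i_1+i_2<2i_2<i_2+i_3<\cdots<i_{l-1}+i_l<2i_l .
\]
Each inequality holds because it adds a strictly larger element to one summand. This chain has $l$ terms of the form $2i_j$ and $l-1$ terms of the form $i_j+i_{j+1}$, so it gives $|I+I|\ge 2l-1$.

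\textbf{The equality case.} Suppose $|I+I|=2l-1$. Then the chain above is all of $I+I$, so every sum $a+b$ with $a,b\in I$ appears in it. Fix $2\le j\le l-1$ and look at $i_{j-1}+i_{j+1}$. It satisfies
\[
i_{j-1}+i_j<i_{j-1}+i_{j+1}<i_j+i_{j+1}.
\]
The only chain element strictly between these two neighbors is $2i_j$, so $i_{j-1}+i_{j+1}=2i_j$. This says $i_{j+1}-i_j=i_j-i_{j-1}$ for every such $j$, so the gaps are constant and $I$ is an arithmetic progression. For $l\le 2$ the claim is automatic, since every such set is an arithmetic progression by convention.

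\textbf{The final clause.} This is the contrapositive of the equality case combined with the lower bound. If $I$ is not an arithmetic progression, then $|I+I|\ne 2l-1$, and together with $|I+I|\ge 2l-1$ this gives $|I+I|\ge 2l$.

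The only delicate point is justifying that $2i_j$ is the unique chain element strictly between $i_{j-1}+i_j$ and $i_j+i_{j+1}$. This follows at once because those two sums are consecutive-but-one in the strictly increasing chain, so nothing else is hard.
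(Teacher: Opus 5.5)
Your proof is correct and complete. The interleaved chain gives $2l-1$ distinct sums; in the equality case, squeezing $i_{j-1}+i_{j+1}$ strictly between $i_{j-1}+i_j$ and $i_j+i_{j+1}$ forces $i_{j-1}+i_{j+1}=2i_j$; and the last clause is the contrapositive. The paper does not prove this lemma itself but only cites it as a classical result, and your argument is the standard proof of that result, so there is nothing to add.
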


\begin{definition}
\label{definition2.2}
Let $n$, $k$ and $r \ge 2$ be integers with $1 \le k < nr$. Let $\ell \ge 1$ be an arbitrary integer, and take
$\boldsymbol{t} = (t_1,t_2,\dots,t_\ell) \in \{1,2,\dots,nr-k\}^\ell$,
$\boldsymbol{h} = (h_1,h_2,\dots,h_\ell) \in \{0,1,2,\dots,k-1\}^\ell$,
such that all pairs $(h_1,t_1),(h_2,t_2),\dots,(h_\ell,t_\ell)$ are mutually distinct.
Let $\boldsymbol{\eta} = (\eta_1,\eta_2,\dots,\eta_\ell) \in (\mathbb{F}_q^*)^\ell$.
Define the degree set
\[
\mathcal{D}_{n,k}(\boldsymbol{t},\boldsymbol{h},\boldsymbol{\eta}) = \bigl\{\deg(p_i(x)) : 0 \le i \le k-1\bigr\}.
\]
Then
\begin{align*}
\mathcal{D}_{n,k}(\boldsymbol{t},\boldsymbol{h},\boldsymbol{\eta})
&= \bigl(\{0,1,\dots,k-1\} \setminus \{h_j : 1 \le j \le \ell\}\bigr) \notag \\
&\quad \cup \left\{k-1+\max_{\substack{1\le s\le\ell\\ h_s=i}} \{t_s\} : i \in \{h_a : 1 \le a \le \ell\} \right\}.
\end{align*}

We further introduce the set
\[
\mathcal{F}_{n,k}(\boldsymbol{t},\boldsymbol{h},\boldsymbol{\eta})
= \bigl\{d_1 + d_2 : d_1,d_2 \in \mathcal{D}_{n,k}(\boldsymbol{t},\boldsymbol{h},\boldsymbol{\eta}),\ d_1 + d_2 < nr\bigr\}.
\]
\end{definition}

\begin{lemma}(\cite{2})
\label{lemma2.7}
Let $n$, $k$ and $r \ge 2$ be positive integers satisfying $1 \le k < nr$.
Let $\boldsymbol{\alpha} = (\alpha_1,\alpha_2,\dots,\alpha_n) \in \mathbb{F}_{q^r}^n$, where no two elements among $\alpha_1,\alpha_2,\dots,\alpha_n$ form a conjugate pair over $\mathbb{F}_q$ and each $\alpha_i$ has exactly $r$ distinct conjugates over $\mathbb{F}_q$. Let
$\boldsymbol{t} = (t_1,t_2,\dots,t_\ell) \in \{1,2,\dots,nr-k\}^\ell$ and $\boldsymbol{h} = (h_1,h_2,\dots,h_\ell) \in \{0,1,2,\dots,k-1\}^\ell$ be such that the pairs $(h_1,t_1),(h_2,t_2),\dots,(h_\ell,t_\ell)$ are distinct, and let $\boldsymbol{\eta} = (\eta_1,\eta_2,\dots,\eta_\ell) \in (\mathbb{F}_q^*)^\ell$.
Then
\[
\dim_{\mathbb{F}_q}\bigl(\mathscr{T}_k(\boldsymbol{\alpha}, \boldsymbol{t},\boldsymbol{h},\boldsymbol{\eta})^2\bigr) \ge \bigl|\mathcal{F}_{n,k}(\boldsymbol{t},\boldsymbol{h},\boldsymbol{\eta})\bigr|.
\]
\end{lemma}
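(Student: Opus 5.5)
The natural approach is to exhibit $|\mathcal{F}_{n,k}(\boldsymbol{t},\boldsymbol{h},\boldsymbol{\eta})|$ elements of $\mathscr{T}_k(\boldsymbol{\alpha},\boldsymbol{t},\boldsymbol{h},\boldsymbol{\eta})^2$ that are $\mathbb{F}_q$-linearly independent.

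The first step is to identify the Schur square with an evaluation image. Write $\mathcal{E}=\mathcal{E}_{\boldsymbol{\alpha},\mathbf 1}$. Since $\mathcal{E}(f)\star\mathcal{E}(g)=\mathcal{E}(fg)$, we get
\[
\mathscr{T}_k(\boldsymbol{\alpha},\boldsymbol{t},\boldsymbol{h},\boldsymbol{\eta})^2=\mathcal{E}\bigl(\operatorname{span}_{\mathbb{F}_q}\{p_ip_j : 0\le i,j\le k-1\}\bigr).
\]
Here $\mathcal{E}$ is regarded as a map on all of $\mathbb{F}_q[x]$. So we may work with the polynomial space $W=\operatorname{span}_{\mathbb{F}_q}\{p_ip_j\}$.

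The second step is to produce one polynomial in $W$ for each element of $\mathcal{F}_{n,k}$. For each $i$, the leading coefficient of $p_i$ is $1$ if $i\notin\{h_j\}$. If $i=h_s$, the leading coefficient is $\eta_s$ for the $s$ with maximal $t_s$ among those with $h_s=i$; this is nonzero because $\eta_s\in\mathbb{F}_q^*$ and the $t_s$ are distinct for fixed $h_s=i$, as the pairs are distinct. Hence $\deg p_i$ runs exactly over $\mathcal{D}_{n,k}$, and the degrees are distinct. Indeed, the twisted degrees satisfy $k-1+t\ge k$, while the untwisted degrees are less than $k$. For each $e\in\mathcal{F}_{n,k}$, choose $d_1,d_2\in\mathcal{D}_{n,k}$ with $d_1+d_2=e$, and let $p_a,p_b$ be the basis polynomials with these degrees. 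Then $g_e=p_ap_b\in W$ has degree exactly $e<nr$, with nonzero leading coefficient in $\mathbb{F}_q$.

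The third step is to show that the vectors $\mathcal{E}(g_e)$ are linearly independent, and I expect this to be the only real point. The $g_e$ have pairwise distinct degrees, so they are linearly independent over $\mathbb{F}_q$, and so is every nontrivial $\mathbb{F}_q$-combination of them. Such a combination is a nonzero polynomial over $\mathbb{F}_q$ of degree at most $\max e<nr$. Suppose its evaluation vanishes at $\alpha_1,\dots,\alpha_n$. Because the polynomial has coefficients in $\mathbb{F}_q$, it then also vanishes at every conjugate $\alpha_i^{q^j}$. By the standing hypothesis these are $nr$ distinct elements, which is impossible for a nonzero polynomial of degree less than $nr$. Thus $\mathcal{E}$ is injective on $\operatorname{span}\{g_e\}$. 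Therefore
\[
\dim_{\mathbb{F}_q}\mathscr{T}_k(\boldsymbol{\alpha},\boldsymbol{t},\boldsymbol{h},\boldsymbol{\eta})^2\ge\dim_{\mathbb{F}_q}\operatorname{span}\{g_e\}=|\mathcal{F}_{n,k}|.
\]
The care needed is exactly in using that the coefficients lie in $\mathbb{F}_q$ to transfer vanishing from $\alpha_i$ to its conjugates, and in the restriction $d_1+d_2<nr$ in the definition of $\mathcal{F}_{n,k}$.
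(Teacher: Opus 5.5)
The paper does not prove this lemma itself (it is quoted from \cite{2}). Your argument is correct and is the standard proof of this bound: you write the Schur square as the evaluation image of $\operatorname{span}_{\mathbb{F}_q}\{p_ip_j\}$, take one product $p_ap_b$ of each degree $e\in\mathcal{F}_{n,k}(\boldsymbol{t},\boldsymbol{h},\boldsymbol{\eta})$, and use that a nonzero polynomial over $\mathbb{F}_q$ of degree $<nr$ cannot vanish at the $nr$ distinct conjugates $\alpha_i^{q^j}$.

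One harmless slip: the degrees $\deg p_i$ need not be pairwise distinct. For example, $\boldsymbol{h}=(0,1)$, $\boldsymbol{t}=(1,1)$ with $k\ge 2$ gives $\deg p_0=\deg p_1=k$. You never use this claim, though: any choice of $p_a,p_b$ with degrees $d_1,d_2$ works, and only the degrees of the $g_e$ have to be distinct.
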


Throughout this paper, we fix $\boldsymbol{v}=(1,1,\dots,1)$ and $\ell=1$ by default.

Let \(\mathfrak{B}\) be the complete set of representatives of all full-degree Frobenius orbits in \(\mathbb{F}_{q^r}\) over \(\mathbb{F}_q\). That is, any two distinct elements in \(\mathfrak{B}\) are not conjugate over \(\mathbb{F}_q\), and each element in \(\mathfrak{B}\) has exactly \(r\) distinct conjugates over \(\mathbb{F}_q\).
The cardinality formula of such a set $\mathfrak{B}$ has been derived in reference \cite{7}, written as
\[
|\mathfrak{B}| = \frac{1}{r}\sum_{d\mid r}\mu(d) q^{\frac{r}{d}}.
\]
Here \(\mu(\cdot)\) stands for the M\"obius function defined on positive integers, whose explicit rule is given below:
\[
\mu(d)=
\begin{cases}
1, & d=1,\\
(-1)^t, & d \text{ is a product of } t \text{ distinct prime factors},\\
0, & d \text{ contains a squared prime divisor}.
\end{cases}
\]
\section{Additive MDS codes from additive GTRS codes over finite fields}

In this section, for $\ell=1$, $1\le t\le nr-k$ and $0\le h\le k-1$, we prove that the code $\mathscr{T}_k(\boldsymbol{\alpha},t,h,\eta)$ is not monomially equivalent to an additive RS code of the same length and dimension.
Then we present a construction of additive MDS codes from GTRS codes in this range of parameters.

\begin{theorem}
\label{theorem3.1}
Let $q$ be a prime power and $n,k,t$ and $r \geq 2$ be positive integers such that $n \le |\mathfrak{B}|$, $3 \leq k < \frac{nr}{2}+1-t$, $1 \leq t \leq nr-k$, $h \in \{0,1,2,\dots,k-1\}$, $\eta \in \mathbb{F}_q^*$.
Let $\boldsymbol{\alpha}=(\alpha_1,\alpha_2,\dots,\alpha_n)\in \mathbb{F}_{q^r}^n$, where no two elements among $\alpha_1,\alpha_2,\dots,\alpha_n$ form a conjugate pair over $\mathbb{F}_q$ and each $\alpha_i$ has exactly $r$ distinct conjugates over $\mathbb{F}_q$.
Then we have $\dim_{\mathbb{F}_q}\bigl(\mathscr{T}_k(\boldsymbol{\alpha}, t, h, \eta)^2\bigr) \geq 2k$.
Furthermore, the code $\mathscr{T}_k(\boldsymbol{\alpha}, t, h, \eta)$  is not monomially equivalent to an additive RS code $\mathcal{ARS}_k(\boldsymbol{\alpha},1)$.

\begin{proof}
The case when $t=1$ has been proved in Theorem 6.2 in \cite{2}.
Next, when $t \geq 2$, we have $p_i(x) = x^i,\ i \in \{0,1,2,\dots,k-1\}\setminus \{h\}
\quad \text{and} \quad
p_h(x) = x^h + \eta x^{k-1+t}$.
The degree set $\mathcal{D}_{n,k}(\boldsymbol{t},\boldsymbol{h},\boldsymbol{\eta})$ is $\{0,1,2,\dots,k-1\}\setminus \{h\} \cup \{k-1+t\}$.
Since \(k < \frac{nr}{2} + 1 - t\), we have \(2(k-1+t) < nr\), which guarantees that all degrees in the Schur square are strictly less than \(nr\).
When \((k,t,h)=(3,2,1)\), then $\mathcal D_{n,k}(t,h,\eta)=\{0,2,4\}$ is an arithmetic progression, then we handle this exceptional case directly. 
The basis polynomials are $p_0(x)=1, p_1(x)=x+\eta x^4, p_2(x)=x^2$.
The six Schur products
\begin{gather*}
p_0 * p_0 = 1, \qquad p_0 * p_1 = x + \eta x^4, \qquad p_0 * p_2 = x^2, \\
p_1 * p_1 = x^2 + 2\eta x^5 + \eta^2x^8, \qquad p_1 * p_2 = x^3 + \eta x^6, \qquad p_2 * p_2 = x^4,
\end{gather*}
are linearly independent over \(\mathbb{F}_q\).
Hence $\dim_{\mathbb{F}_q}\bigl(\mathscr{T}_k(\boldsymbol{\alpha}, t, h, \eta)^2\bigr) \ge 6=2k$.
In all remaining cases, the degree set is not an arithmetic progression and we have $|\mathcal{F}_{n,k}(\boldsymbol{t},\boldsymbol{h},\boldsymbol{\eta})| \geq 2k$ by Lemma \ref{lemma2.6}.
By Lemma \ref{lemma2.7}, $\dim_{\mathbb{F}_q}\bigl(\mathscr{T}_k(\boldsymbol{\alpha}, t, h, \eta)^2\bigr) \geq 2k$.
Then by Lemma \ref{lemma2.5}, we derive that the code $\mathscr{T}_k(\boldsymbol{\alpha}, t, h, \eta)$  is not monomially equivalent to an additive RS code $\mathcal{ARS}_k(\boldsymbol{\alpha},1)$.
\end{proof}
\end{theorem}

\begin{lemma}(\cite{2})
\label{lemma3.1}
Let  $\boldsymbol{\alpha} = (\alpha_1, \alpha_2, \dots, \alpha_n) \in \mathbb{F}_{q^r}^n$, where no two elements among $\alpha_1, \alpha_2, \dots, \alpha_n$ form a conjugate pair over $\mathbb{F}_q$ and each $\alpha_i$ has exactly $r$ distinct conjugates over $\mathbb{F}_q$. Then for $1 \leq k \leq nr$, the code $\mathcal{ARS}_{k}(\boldsymbol{\alpha})$ is an additive MDS code of length $n$ and dimension $k$ over $\mathbb{F}_{q^r}$.
\end{lemma}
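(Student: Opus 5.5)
The statement to prove is Lemma \ref{lemma3.1}: $\mathcal{ARS}_k(\boldsymbol{\alpha})$ has $\mathbb{F}_q$-dimension $k$ and minimum distance $n-\lceil k/r\rceil+1$. The plan is to obtain the dimension from injectivity of the evaluation map, and to get the distance by counting Frobenius conjugates of roots, matching the Singleton bound of Lemma \ref{lemma2.1}.

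\emph{Dimension.} Let $f\in\mathbb{F}_q[x]_{<k}$ be nonzero with $f(\alpha_i)=0$ for all $i$. Since $f$ has coefficients in $\mathbb{F}_q$, we get
\[
f(\alpha_i^{q^j})=f(\alpha_i)^{q^j}=0 \quad \text{for all } j.
\]
The hypotheses (full orbits of size $r$, no two $\alpha_i$ conjugate) make the $nr$ elements $\alpha_i^{q^j}$, $1\le i\le n$, $0\le j\le r-1$, pairwise distinct. So $f$ would have $nr\ge k>\deg f$ roots, which is impossible. Hence $\mathcal{E}_{\boldsymbol{\alpha},\mathbf{1}}$ is injective and $\dim_{\mathbb{F}_q}\mathcal{ARS}_k(\boldsymbol{\alpha})=k$.

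\emph{Distance.} Take a nonzero codeword $(f(\alpha_1),\dots,f(\alpha_n))$ and let $z$ be its number of zero coordinates. By the same conjugation argument, every index $i$ with $f(\alpha_i)=0$ contributes $r$ distinct roots $\alpha_i,\alpha_i^q,\dots,\alpha_i^{q^{r-1}}$ of $f$, and these root sets are disjoint for different $i$. Therefore $zr\le \deg f\le k-1$, so $z\le \lfloor (k-1)/r\rfloor$ and
\[
w_H\ge n-\left\lfloor \frac{k-1}{r}\right\rfloor .
\]
The one step needing care is the identity $\lfloor (k-1)/r\rfloor=\lceil k/r\rceil-1$, valid for positive integers $k$ and $r$. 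Write $k=ar+b$ with $1\le b\le r$; then both sides equal $a$. This gives $d\ge n-\lceil k/r\rceil+1$.

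Lemma \ref{lemma2.1} gives the reverse inequality $d\le n-\lceil k/r\rceil+1$, so equality holds and the code is additive MDS. I expect no real obstacle: the only substantive inputs are that Frobenius fixes $\mathbb{F}_q$-coefficients, which turns each zero coordinate into a full orbit of $r$ roots, and the floor/ceiling identity above.
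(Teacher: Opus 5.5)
Your proof is correct and complete. Injectivity follows from the $nr$ distinct conjugates. Each zero coordinate contributes a full orbit of $r$ roots, so $zr\le k-1$, and with $\lfloor (k-1)/r\rfloor=\lceil k/r\rceil-1$ and the Singleton bound of Lemma~\ref{lemma2.1} this gives exactly $d=n-\lceil k/r\rceil+1$. The paper gives no proof of this lemma and only cites it from \cite{2}, but your Frobenius-orbit root counting is the same mechanism the paper uses in its injectivity remark in Section 2 and in the proof of Theorem~\ref{theorem3.2}, so it is essentially the same approach.
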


\begin{lemma}
\label{lemma3.2}
Suppose we have the following recurrence relation:

\begin{equation*}
\begin{cases}
c_{m-1} &= -f_{M-1},\\
c_{m-2} &= -f_{M-2} - f_{M-1}c_{m-1},\\
c_{m-3} &= -f_{M-3} - f_{M-2}c_{m-1} - f_{M-1}c_{m-2},\\
&\vdots\\
c_{0} &= -f_{M-m} - \displaystyle\sum_{j=1}^{m-1} f_{M-m+j}\,c_{m-j},
\end{cases}
\end{equation*}

i.e.

\begin{equation*}
\begin{cases}
c_{m-1} = -f_{M-1},\\
c_{m-j} = -f_{M-j} - \sum\limits_{l=1}^{j-1} f_{M-l} \cdot c_{m-(j-l)},\quad 2 \leq j \leq m.
\end{cases}
\end{equation*}

Then
\[
c_{m-j} = \sum_{\lambda=1}^{j} (-1)^{\lambda} \sum_{\substack{l_1+\cdots+l_{\lambda}=j \\ l_i\ge 1}} \prod_{i=1}^{\lambda} f_{M-l_i}, \quad 1 \leq j \leq m,
\tag{1}
\]
where the inner sum is over all ordered $\lambda$-tuples $(l_1, \ldots, l_\lambda)$ of positive integers summing to $j$.

\medskip
\begin{proof}
\noindent
We proceed with mathematical induction on the index $j$.

\begin{enumerate}
\item Base case: $j=1$.
The right-hand side of Equation (1) is
\[
 \sum_{\lambda=1}^{1} (-1)^{\lambda} \sum_{\substack{l_1+\cdots+l_{\lambda}=1 \\ l_i\ge 1}} \prod_{i=1}^{\lambda} f_{M-l_i} = -f_{M-1}=c_{m-1}.
\]

\item Inductive hypothesis: Suppose the identity (1) holds for every integer $s$ satisfying $1\le s<j$, that is
\[
c_{m-s} = \sum_{\lambda=1}^{s} (-1)^{\lambda} \sum_{\substack{l_1+\cdots+l_{\lambda}=s \\ l_i\ge 1}} \prod_{i=1}^{\lambda} f_{M-l_i},\quad 1\le s<j.
\]

\item Inductive step: We now establish the formula for the target case $j$. From the given recurrence relation and inductive hypothesis ($j>1$), we have
\[
c_{m-j} = -f_{M-j} - \sum_{l=1}^{j-1} f_{M-l} \cdot c_{m-(j-l)}, \quad 2 \leq j \leq m.
\]
Let $s = j-l$, which transforms the expression into
\[
c_{m-j} = -f_{M-j} - \sum_{s=1}^{j-1} f_{M-j+s} \cdot c_{m-s}.
\]
Substitute the inductive formula for $c_{m-s}$ into the above equality:
\[
c_{m-j} = -f_{M-j} - \sum_{s=1}^{j-1} f_{M-j+s} \left( \sum_{\lambda=1}^{s} (-1)^{\lambda} \sum_{\substack{l_1+\cdots+l_{\lambda}=s \\ l_i\ge 1}} \prod_{i=1}^{\lambda} f_{M-l_i} \right). \tag{2}
\]

We aim to show that the right-hand side of Equation (1) coincides with the right-hand side of Equation (2).
Denote the right-hand side of Equation (1) as $S_j$.
When $l_1=j$, we have $\lambda=1$, which yields
\[
(-1)^1 \sum_{l_1=j} f_{M-l_1} = -f_{M-j}.
\]
For the scenario $l_1 = j-s$ with $1\le s\le j-1$, we have $\lambda\ge 2$ and $l_2+\cdots+l_{\lambda}=s$.
Since $l_i \ge 1$ , we deduce the bound $\lambda-1 \le s$, or equivalently $\lambda \le s+1$.
Then we have
\[
\begin{aligned}
S_j &= \sum_{\lambda=1}^{j} (-1)^{\lambda} \sum_{\substack{l_1+\cdots+l_{\lambda}=j \\ l_i\ge 1}} \prod_{i=1}^{\lambda} f_{M-l_i} \\
&= -f_{M-j} + \sum_{\lambda=2}^{j} (-1)^{\lambda} \sum_{\substack{l_1+\cdots+l_{\lambda}=j \\ l_i\ge 1}} \prod_{i=1}^{\lambda} f_{M-l_i} \\
&= -f_{M-j} + \sum_{\lambda=2}^{j} (-1)^{\lambda} \sum_{s=\lambda-1}^{j-1} f_{M-(j-s)} \sum_{\substack{l_2+\cdots+l_{\lambda}=s \\ l_i\ge 1}} \prod_{i=2}^{\lambda} f_{M-l_i} \quad (since \quad \lambda-1 \le s) \\
&= -f_{M-j} + \sum_{s=1}^{j-1} \sum_{\lambda=2}^{s+1} (-1)^{\lambda} f_{M-(j-s)} \sum_{\substack{l_2+\cdots+l_{\lambda}=s \\ l_i\ge 1}} \prod_{i=2}^{\lambda} f_{M-l_i} \quad (\text{switch the order of summation})  \\
&= -f_{M-j} - \sum_{s=1}^{j-1} f_{M-(j-s)} \sum_{\lambda=2}^{s+1} (-1)^{\lambda-1} \sum_{\substack{l_2+\cdots+l_{\lambda}=s \\ l_i\ge 1}} \prod_{i=2}^{\lambda} f_{M-l_i} \\
&= -f_{M-j} - \sum_{s=1}^{j-1} f_{M-(j-s)} \sum_{\lambda'=1}^{s} (-1)^{\lambda'} \sum_{\substack{l_2+\cdots+l_{\lambda'+1}=s \\ l_i\ge 1}} \prod_{i=2}^{\lambda'+1} f_{M-l_i} \quad (let \quad \lambda-1=\lambda').
\end{aligned}
\]
\end{enumerate}
Since the indices are dummy variables, we may rename \( l_2, \ldots, l_{\lambda'+1} \) as \( l_1, \ldots, l_{\lambda'} \). Under this relabeling,
\[
\begin{aligned}
S_j&=-f_{M-j} - \sum_{s=1}^{j-1} f_{M-(j-s)} \sum_{\lambda'=1}^{s} (-1)^{\lambda'} \sum_{\substack{l_2+\cdots+l_{\lambda'+1}=s \\ l_i\ge 1}} \prod_{i=2}^{\lambda'+1} f_{M-l_i}
\\&=-f_{M-j} - \sum_{s=1}^{j-1} f_{M-(j-s)} \sum_{\lambda'=1}^{s} (-1)^{\lambda'} \sum_{\substack{l_1+\cdots+l_{\lambda'}=s \\ l_i\ge 1}} \prod_{i=1}^{\lambda'} f_{M-l_i}
\\&=-f_{M-j} - \sum_{s=1}^{j-1} f_{M-(j-s)} \sum_{\lambda=1}^{s} (-1)^{\lambda} \sum_{\substack{l_1+\cdots+l_{\lambda}=s \\ l_i\ge 1}} \prod_{i=1}^{\lambda} f_{M-l_i}
\\&=c_{m-j}.
\end{aligned}
\]
Thus
\[
c_{m-j} =S_j= \sum_{\lambda=1}^{j} (-1)^{\lambda} \sum_{\substack{l_1+\cdots+l_{\lambda}=j \\ l_i\ge 1}} \prod_{i=1}^{\lambda} f_{M-l_i}.
\]
This completes the induction.
\end{proof}
\end{lemma}

\noindent\textbf{Notation.}
Let $k=ar+b$ and $t=ur+v$ with $0\le b,v\le r-1$.
For $s$ in the range
\[
S=
\begin{cases}
\{0,1,\dots,u-1\}, & b=0,\ v=0,\\
\{0,1,\dots,u\}, & b=0,\ v\neq 0,\\
\{1,2,\dots,u\}, & b\neq 0,\ 1\le b+v\le r,\\
\{1,2,\dots,u+1\}, & b\neq 0,\ r+1\le b+v\le 2r-2,
\end{cases}
\]
put $M=(a+s)r=k+sr-b$ and $m=(k-1+t)-M=t-sr-1+b$.
Let $I\subseteq\{1,\dots,n\}$ with $|I|=a+s$, and set
\[
f_I(x)=\prod_{i\in I}\prod_{j=0}^{r-1}\bigl(x-\alpha_i^{q^j}\bigr)=\sum_{i=0}^{M}f_ix^i ,
\]
where $\boldsymbol{\alpha}=(\alpha_1,\alpha_2,\dots,\alpha_n)\in\mathbb{F}_{q^r}^n$ and $f_i$ is the coefficient of $x^i$ in $f_I(x)$, $0\le i\le M$ and $f_M=1$.
The coefficients $c_i$ of the polynomial
\[
c(x)=\sum_{i=0}^{m}c_ix^i\in\mathbb{F}_q[x]
\]
are fixed by $c_m=1$ together with the relations of Lemma \ref{lemma3.2}, namely
\[
c_{m-j}=\sum_{\lambda=1}^{j}(-1)^{\lambda}\!\!\sum_{\substack{l_1+\cdots+l_{\lambda}=j\\ l_i\ge1}}\ \prod_{i=1}^{\lambda}f_{M-l_i},\qquad 1\le j\le m,
\]
where the inner sum runs over all ordered $\lambda$-tuples $(l_1,\dots,l_\lambda)$ of positive integers summing to $j$; when $m=0$ this simply reads $c(x)=c_0=1$.
We emphasize that the coefficients $f_i$ and $c_i$ depend on both the subset $I$ and the evaluation vector $\boldsymbol{\alpha}$.
For brevity, we define the notation $P_s(I)$ as the logical condition given by
\[
P_s(I) :=
\begin{cases}
\text{vacuously true}, & s=0, \\[8pt]
\displaystyle\sum_{i_1+i_2=j} c_{i_1}f_{i_2} = 0,\ \forall\, k \le j \le M-1, & s \ne 0.
\end{cases}
\]

\begin{theorem}
\label{theorem3.2}
Let $q$ be a prime power and $n,k,t$ and $r \geq 2$ be positive integers such that $n \le |\mathfrak{B}|$, $1 \leq k < nr$, $1 \leq t \leq nr-k$, $h \in \{0,1,2,\dots,k-1\}$, $\eta \in \mathbb{F}_q^*$.
Let $k=ar+b$, $t=ur+v$, where $0 \leq b,v \leq r-1$, and let
\[
S=\begin{cases}
\{0,1,\dots,u-1\}, & b=0,\ v=0,\\[2pt]
\{0,1,\dots,u\}, & b=0,\ v\neq 0,\\[2pt]
\{1,2,\dots,u\}, & b\neq 0,\ 1\leq b+v\leq r,\\[2pt]
\{1,2,\dots,u+1\}, & b\neq 0,\ r+1\leq b+v\leq 2r-2.
\end{cases}
\]
Let $\boldsymbol{\alpha}=(\alpha_1,\alpha_2,\dots,\alpha_n)\in \mathbb{F}_{q^r}^n$, where no two elements among $\alpha_1,\alpha_2,\dots,\alpha_n$ form a conjugate pair over $\mathbb{F}_q$ and each $\alpha_i$ has exactly $r$ distinct conjugates over $\mathbb{F}_q$.
Then the code $\mathscr{T}_k(\boldsymbol{\alpha}, t, h, \eta)$ is an additive MDS code if and only if $(\eta,\boldsymbol{\alpha}) \notin \Lambda$, where
\[
\Lambda =\bigcup_{s\in S}\Lambda _s=\bigcup_{s\in S}\ \bigcup_{\substack{I \subseteq \{1, \dots, n\}\\ |I| = a+s}} \left\{(\eta, \boldsymbol{\alpha}) : P_s(I) \text{ holds and } \eta^{-1} = \sum\limits_{i_1+i_2=h}c_{i_1}f_{i_2}\in \mathbb{F}_q^*\right\},
\]
with $M=(a+s)r$ and $0 \leq i_1 \leq m$, $0 \leq i_2 \leq M$, where $m$ and $P_s(I)$ are as in the Notation preceding this theorem.
When $S=\varnothing$, which happens exactly when $b\neq 0$ and $1\le t\le r-b$, we have $\Lambda=\varnothing$ and the code $\mathscr{T}_k(\boldsymbol{\alpha},t,h,\eta)$ is an additive MDS code for every $\eta\in\mathbb{F}_q^*$.
\end{theorem}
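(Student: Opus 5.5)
The plan is to turn the MDS property into a statement about vanishing of twisted polynomials on unions of full Frobenius orbits, and then to decode that statement with the division recurrence of Lemma \ref{lemma3.2}.

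\textbf{Step 1: reduction to orbit vanishing.} By Lemma \ref{lemma2.1}, the code is additive MDS exactly when every nonzero codeword has weight at least $n-\lceil k/r\rceil+1$. Since $f$ has coefficients in $\mathbb{F}_q$, $f(\alpha_i)=0$ forces $f$ to vanish on the whole orbit $\{\alpha_i^{q^j}\}$. Hence a codeword vanishes on the coordinates of $I$ if and only if $f_I(x)\mid f(x)$, where $\deg f_I=|I|r$. Thus MDS fails if and only if some nonzero $f\in\mathscr{P}_{n,k}(t,h,\eta)$ is divisible by $f_I$ with $|I|\ge \lceil k/r\rceil$.

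\textbf{Step 2: restricting $|I|$.} If $\eta a_h=0$, then $\deg f\le k-1<|I|r$, so $f=0$. Otherwise $\deg f=k-1+t$, and we need $|I|r\le k-1+t$. Writing $|I|=a+s$, the condition $|I|\ge\lceil k/r\rceil$ gives $s\ge 0$ when $b=0$ and $s\ge1$ when $b\ne0$. The condition $(a+s)r\le k-1+t=ar+b-1+ur+v$ gives $sr\le ur+b+v-1$. Checking the four cases of $b$ and $v$ recovers exactly the set $S$; for instance, when $b=v=0$ we get $s\le u-1$. I will also note that a larger $I$ can be shrunk to a subset of size $a+s$ with $s$ in range, so it suffices to consider $|I|=a+s$ exactly. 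This step also gives the claim that $S=\varnothing$ precisely when $b\ne0$ and $b+v\le r$ with $u=0$, that is, $1\le t\le r-b$.

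\textbf{Step 3: solving the divisibility for fixed $I$.} Normalize $\eta a_h=1$, so $f$ is monic of degree $M+m$, and write $f=f_I\cdot c$ with $c$ monic of degree $m$. Comparing coefficients of $x^{M+m-1},\dots,x^{M}$ in $f_Ic$ gives exactly the triangular recurrence of Lemma \ref{lemma3.2}, because these top coefficients of $f$ vanish. Indeed, $f$ has no monomials of degree between $k$ and $k-2+t$, and $M\ge k$ when $s\ge1$. So $c$ is forced and given by formula (1). The remaining constraints are:
\begin{itemize}
\item the coefficients of $x^j$ in $f_Ic$ must vanish for $k\le j\le M-1$, which is exactly $P_s(I)$ (this range is empty when $s=0$, since then $M=k$);
\item the coefficients of $x^j$ for $j<k$ must be the free parameters $a_j$, so there is no condition there except at $j=h$.
\end{itemize}
At $j=h$ we need $a_h=\eta^{-1}$ to equal $\sum_{i_1+i_2=h}c_{i_1}f_{i_2}$. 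Conversely, when these conditions hold, $f=f_Ic$ is a genuine element of $\mathscr{P}_{n,k}$ that is a nonzero codeword vanishing on $I$. The value $\sum c_{i_1}f_{i_2}$ must be nonzero because $\eta\ne0$, and it lies in $\mathbb{F}_q$ because $f_I\in\mathbb{F}_q[x]$.

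The main obstacle is the bookkeeping in Step 3: checking that the top block of coefficient equations is precisely the recurrence in Lemma \ref{lemma3.2}, with the indices $M$ and $m$ aligned. It also includes confirming that $M\ge k$, with equality iff $s=0$, so that the ranges defining $P_s$ are correct in each of the four cases of $S$. Equivalence for $|I|$ larger than $a+s$ follows by taking a subset, since any polynomial divisible by $f_I$ is also divisible by $f_{I'}$ for $I'\subseteq I$.
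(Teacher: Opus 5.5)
Your proposal is correct and follows essentially the same route as the paper. You reduce non-MDS to a nonzero twisted polynomial divisible by $f_I$ with $|I|=a+s$, force $a_h\neq 0$, solve the top coefficient block via the triangular recurrence of Lemma~\ref{lemma3.2}, read off $P_s(I)$ and the hook condition, and prove the converse by constructing $f_I c$. The only cosmetic difference is that you obtain $S$ directly from the degree bound $(a+s)r\le k-1+t$, whereas the paper sandwiches $d$ between the Singleton bound and the minimum distance of $\mathcal{ARS}_{k+t}(\boldsymbol{\alpha})\supseteq\mathscr{T}_k(\boldsymbol{\alpha},t,h,\eta)$ via Lemma~\ref{lemma3.1}.
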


\begin{proof}
\noindent
The code $\mathscr{T}_k(\boldsymbol{\alpha}, t, h, \eta)$ has dimension $k$ over $\mathbb{F}_q$.
By the Singleton bound for additive codes, we know that $d \leq n - \left\lceil \frac{k}{r} \right\rceil + 1$.
Since
\[
\operatorname{span}\{1, x,x^2, \dots,x^h+\eta x^{k-1+t},\dots, x^{k-1} \} \subseteq \operatorname{span}\{1, x, \dots, x^{k-1}, x^k, x^{k+1},\dots, x^{k-1+t}\},
\]
we have $\mathscr{T}_k(\boldsymbol{\alpha}, t, h, \eta) \subseteq \mathcal{ARS}_{k+t}(\boldsymbol{\alpha})$.
The code $\mathcal{ARS}_{k+t}(\boldsymbol{\alpha})$ is an additive MDS code by Lemma \ref{lemma3.1}, i.e. the minimum distance of $\mathcal{ARS}_{k+t}(\boldsymbol{\alpha})$ is $ n - \left\lceil \frac{k+t}{r} \right\rceil + 1$.
Thus
\[
n - \left\lceil \frac{k+t}{r} \right\rceil + 1 \leq d \leq n - \left\lceil \frac{k}{r} \right\rceil + 1.
\]
Then
\[
\left\lceil \frac{k}{r} \right\rceil=\left\lceil \frac{ar+b}{r} \right\rceil=a+\left\lceil \frac{b}{r} \right\rceil=
\begin{cases}
a, & b=0 \\
a+1, & 1 \leq b \leq r-1,
\end{cases}
\]
and, since $0 \leq b+v \leq 2r-2$,
\[
\left\lceil \frac{k+t}{r} \right\rceil=\left\lceil \frac{ar+b+ur+v}{r} \right\rceil=a+u+\left\lceil \frac{b+v}{r} \right\rceil=
\begin{cases}
a+u, & b+v=0 \\
a+u+1, & 1 \leq b+v \leq r \\
a+u+2, & r+1 \leq b+v \leq 2r-2,
\end{cases}
\]
where the third case is empty when $r=2$. Then we obtain
\[
\left\{
\begin{aligned}
n-(a+u-1) &\leq d \leq n-a+1, & b&=0,\, v = 0; \\
n-(a+u)   &\leq d \leq n-a+1, & b&=0,\, v \neq 0; \\
n-(a+u)   &\leq d \leq n-a,     & b&\neq 0,\, 1 \leq b+v \leq r; \\
n-(a+u+1) &\leq d \leq n-a,     & b&\neq 0,\, r+1 \leq b+v \leq 2r-2 .
\end{aligned}
\right.
\]
Comparing these intervals with the Singleton bound $n-\big\lceil k/r\big\rceil+1$, we see that in each of the four cases the code fails to be MDS if and only if $d=n-(a+s)$ for some $s\in S$; this is the only point at which the four cases are distinguished, and the argument below is uniform in $s\in S$.

In the degenerate case $S=\varnothing$, which occurs exactly when $b\neq 0$, $u=0$ and
$1\le b+v\le r$, no such $s$ exists, and the interval for $d$ in the third line reduces to the single value $d=n-a$, which coincides with the Singleton bound $n-\lceil k/r\rceil+1=n-(a+1)+1=n-a$. 
Hence $d$ attains the Singleton bound, the code is MDS, and correspondingly $\Lambda=\varnothing$: no value of $\eta$ has to be excluded. So there is nothing to prove in this case.

Fix $s\in S$ and suppose that the code $\mathscr{T}_k(\boldsymbol{\alpha}, t, h, \eta)$ is not an MDS code, so that $d=n-(a+s)$ for this $s$.
Choose a minimum-weight codeword with exactly $a+s$ zero components; it corresponds to a nonzero twisted polynomial
\[
g(x) = a_0 +a_1 x+ \cdots + a_{h} (x^{h}+\eta x^{k-1+t}) +\cdots + a_{k-1}x^{k-1}  \in \mathscr{P}_{n,k}(t,h,\eta).
\]
There exists a subset $I \subseteq \{1, 2, \dots, n\}$ with $|I| = a+s$ such that $g(\alpha_i) = 0$ for all $i \in I$.
Since \(g(x) \in \mathbb{F}_q[x]\), all conjugates of a root \(\alpha_i\) are also roots, yielding \((a + s)r\) pairwise distinct roots in total.
Thus, we have $g(\alpha_i^{q^j})=0$ for all $i \in I$ and $0 \leq j \leq r-1$, which means that $g(x)$ has at least $(a+s)r=k+sr-b$ roots.
Let $M=(a+s)r=k+sr-b$.

Two elementary inequalities, used below, hold for every $s\in S$.
\begin{itemize}
\item $M\ge k$, that is, $sr\ge b$. If $b=0$ this is immediate; if $b\neq 0$, then $s\ge 1$ by the definition of $S$, hence $sr-b\ge r-(r-1)>0$.
\item $m=(k-1+t)-M=(u-s)r+(b+v)-1\ge 0$. Indeed: if $b=0$, $v=0$, then $s\le u-1$ and $m\ge r-1$; if $b=0$, $v\neq 0$, then $s\le u$ and $m\ge v-1\ge 0$; if $b\neq 0$ and $1\le b+v\le r$, then $s\le u$ and $m\ge b+v-1\ge 0$; if $b\neq 0$ and $r+1\le b+v\le 2r-2$, then $s\le u+1$ and $m\ge b+v-r-1\ge 0$.
\end{itemize}
In particular $M\le k-1+t$.

Since $g(x)$ has degree at most $k-1+t$ and vanishes at the $M$ distinct points $\alpha_i^{q^j}$ with $M\ge k$, we conclude that $a_h \neq 0$. If $a_h=0$, then $\deg g(x) \leq k-1 < M$, which forces $g(x) \equiv 0$, contradicting that the codeword is nonzero. Hence $a_h \neq 0$ and $\deg g(x)=k-1+t$.

Let
\[
f_I(x) = \prod\limits_{i \in I} \prod\limits_{j=0}^{r-1} (x - \alpha_i^{q^j})=\sum\limits_{i=0}^{M}f_ix^i \in \mathbb{F}_q[x],
\]
where $f_{i}$ is the coefficient of $x^{i}$ in $f_I(x)$ and $f_{M}=1$.
Since $f_I$ is a product of pairwise distinct minimal polynomials over $\mathbb{F}_q$ and $g\in\mathbb{F}_q[x]$ vanishes at all its roots, $f_I$ divides $g(x)$ in $\mathbb{F}_q[x]$.
Let $m=(k-1+t)-M$, then $m=(k-1+t)-(k+sr-b)=t-sr-1+b=(u-s)r+b+v-1\ge0$, and we may write
$g(x)=\eta a_h c(x) f_I(x)$, where
\[
c(x)=\sum\limits_{i=0}^{t-sr-1+b}c_ix^i=\sum\limits_{i=0}^{m}c_ix^i\in \mathbb{F}_q[x],
\]
where $c_{i}$ is the coefficient of $x^{i}$ in $c(x)$ and $c_{t-sr-1+b}=c_m=1$.
Similarly, we write $g(x)=\sum\limits_{i=0}^{k-1+t}g_ix^i\in \mathbb{F}_q[x]$, where $g_{i}$ is the coefficient of $x^{i}$ in $g(x)$ and $g_{k-1+t}=\eta a_h$.

Since $g(x) \in \mathscr{P}_{n,k}(t,h,\eta)$, its coefficients satisfy
\[
g_{k-1+t} = \eta g_h,
\qquad\text{and}\qquad
g_j=0 \ \text{ for all } \ k\le j\le k+t-2,
\]
the second condition being empty when $t=1$.
By expanding $g(x)=\eta a_h c(x) f_I(x)$ and comparing the coefficients on the two sides, these conditions become the following equations:
\begin{equation}
\eta a_h=\eta\eta a_h\left(\sum\limits_{i_1+i_2=h}c_{i_1}f_{i_2}\right),
\label{eq1}
\end{equation}
\begin{equation}
\left\{
\begin{aligned}
g_{k+t-2}&=\eta a_h(c_{m-1}+f_{M-1})=0\\
g_{k+t-3}&=\eta a_h(c_{m-2}+c_{m-1}f_{M-1}+f_{M-2})=0\\
&\vdots\\
g_{M}&=\eta a_h(c_0+c_1f_{M-1}+\cdots+c_{m-1}f_{M-m+1}+f_{M-m})=0,
\end{aligned}
\right.
\label{eq2}
\end{equation}
\begin{equation}
\left\{
\begin{aligned}
g_{M-1}&=\eta a_h\left(\sum\limits_{i_1+i_2=M-1}c_{i_1}f_{i_2}\right)=0\\
&\vdots\\
g_{k}&=\eta a_h\left(\sum\limits_{i_1+i_2=k}c_{i_1}f_{i_2}\right)=0,
\end{aligned}
\right.
\label{eq3}
\end{equation}
where $c_{i}=0$ whenever $i<0$ or $i>m$ and $f_{i}=0$ whenever $i<0$ or $i>M$. Here \eqref{eq2} contains the $m$ equations indexed by $M\le j\le k+t-2$ and is empty when $m=0$, while \eqref{eq3} contains the $M-k=sr-b$ equations indexed by $k\le j\le M-1$ and is empty when $M=k$; together they account for all $k\le j\le k+t-2$.

Equation (\ref{eq2}) can be transformed into the following matrix form:
\begin{equation}
\begin{pmatrix}
1 & 0 & 0 & \cdots & 0 \\
f_{M-1} & 1 & 0 & \cdots & 0 \\
f_{M-2} & f_{M-1} & 1 & \cdots & 0 \\
\vdots & \vdots & \vdots & \ddots & \vdots \\
f_{M-m+1} & f_{M-m+2} & f_{M-m+3} & \cdots & 1
\end{pmatrix}
\begin{pmatrix}
c_{m-1} \\
c_{m-2} \\
c_{m-3} \\
\vdots \\
c_0
\end{pmatrix}
=
\begin{pmatrix}
-f_{M-1} \\
-f_{M-2} \\
-f_{M-3} \\
\vdots \\
-f_{M-m}
\end{pmatrix}
\label{eq4}
\end{equation}
As all diagonal entries of the lower triangular coefficient matrix equal $1$, its determinant is $1$, which guarantees a unique solution for $c_0, c_1, \dots, c_{m-1}$, which can be solved sequentially via forward substitution, i.e.
\begin{equation}
\begin{cases}
c_{m-1} &= -f_{M-1},\\
c_{m-2} &= -f_{M-2} - f_{M-1}c_{m-1},\\
c_{m-3} &= -f_{M-3} - f_{M-2}c_{m-1} - f_{M-1}c_{m-2},\\
&\vdots\\
c_{0} &= -f_{M-m} - \displaystyle\sum_{j=1}^{m-1} f_{M-m+j}\,c_{m-j}.
\end{cases}
\label{eq:c_solution_new}
\end{equation}
By Lemma \ref{lemma3.2}, then
\[
c_{m-j} = \sum_{\lambda=1}^{j} (-1)^{\lambda} \sum_{\substack{l_1+\cdots+l_{\lambda}=j \\ l_i\ge 1}} \prod_{i=1}^{\lambda} f_{M-l_i}, \quad 1 \leq j \leq m,
\]
where the inner sum is over all ordered $\lambda$-tuples $(l_1, \ldots, l_\lambda)$ of positive integers summing to $j$.
This reveals that \(c_i\) can be expressed as a polynomial in terms of \(f_i\).

Since
\[
\prod_{i\in I}\prod_{j=0}^{r-1} \big(x - \alpha_i^{q^j}\big)
= \sum_{i=0}^{M} f_i x^i
= x^{M} + \sum_{i=0}^{M-1} f_i x^i,
\]
and by Vieta's formulas, $f_{M-j}=(-1)^j e_j$, where $e_j$ denotes the $j$-th elementary symmetric sum evaluated over all $\alpha_i$ together with their $\mathbb{F}_q$-conjugates, $i \in I$, $|I| = a+s$ and $1 \leq j \leq M$.
In other words, once all $\alpha_i$ are given, all $f_i$ can be uniquely determined, which further yields all coefficients $c_i$.
Therefore, we assume in the following that all $c_i$ and $f_i$ are known quantities.

The conditions imposed by the remaining equations in (\ref{eq3}) depend only on the set $I$ and are denoted by $P_s(I)$; by definition $P_s(I)$ is the conjunction of the equations
\[
\sum_{i_1+i_2 = j} c_{i_1} f_{i_2} = 0, \quad k \le j \le M-1.
\]
When $M=k$, that is when $s=0$, this range is empty and $P_s(I)$ holds vacuously.

The equation (\ref{eq1}) becomes $\eta^{-1} =\sum\limits_{i_1+i_2=h}c_{i_1}f_{i_2} \in\mathbb{F}_q^*$.

Now, for $s\in S$, define
\[
\Lambda _s = \bigcup_{\substack{I \subseteq \{1, \dots, n\}\\ |I| = a+s}} \left\{(\eta, \boldsymbol{\alpha}) : P_s(I) \text{ holds and } \eta^{-1} = \sum\limits_{i_1+i_2=h}c_{i_1}f_{i_2}\right\}
\]
and
\[
\Lambda =\bigcup_{s\in S}\Lambda _s=\bigcup_{s\in S}\ \bigcup_{\substack{I \subseteq \{1, \dots, n\}\\ |I| = a+s}} \left\{(\eta, \boldsymbol{\alpha}) : P_s(I) \text{ holds and } \eta^{-1} = \sum\limits_{i_1+i_2=h}c_{i_1}f_{i_2}\right\}.
\]
So if the code $\mathscr{T}_k(\boldsymbol{\alpha},t,h,\eta)$ is not an MDS code, then $(\eta,\boldsymbol{\alpha}) \in \Lambda $.

Conversely, if $(\eta,\boldsymbol{\alpha}) \in \Lambda $, then there exist $s\in S$ and $I \subseteq \{1, 2, \dots, n\}$ with $|I| = a+s$ such that $P_s(I)$ holds and $\eta^{-1} = \sum\limits_{i_1+i_2=h}c_{i_1}f_{i_2}$.
Let $c_m=1$ and
\[
c_{m-j} = \sum_{\lambda=1}^{j} (-1)^{\lambda} \sum_{\substack{l_1+\cdots+l_{\lambda}=j \\ l_i\ge 1}} \prod_{i=1}^{\lambda} f_{M-l_i}, \quad 1 \leq j \leq m,
\]
then define $c(x)=\sum\limits_{i=0}^{m}c_ix^i\in \mathbb{F}_q[x]$.
Define $g(x)=\eta c(x) f_I(x)$, then a direct verification gives $g_{k-1+t}=\eta g_h$, $g_k= g_{k+1}=\cdots= g_{k+t-2}=0$ and $g_{k-1+t}=\eta$, which means that $0 \neq g(x) \in \mathscr{P}_{n,k}(t,h,\eta)$.
The twisted polynomial corresponds to a codeword $c_g=\left ( g(\alpha_1),g(\alpha_2),\dots,g(\alpha_n)\right )$.
Since $g(x)$ has $|I|= a+s$ zeros in $\{\alpha_1, \alpha_2, \dots, \alpha_n\}$, i.e. $c_g$ has at least $a+s$ zero components, we have
\[
d \leq \operatorname{wt}_H(c_g) \leq n-(a+s).
\]
If $b=0$, then $s\ge 0$ and $\left\lceil k/r\right\rceil=a$, hence $d\le n-a < n-a+1$.
If $b\neq 0$, then $s\ge 1$ and $\left\lceil k/r\right\rceil=a+1$, hence $d\le n-a-1 < n-a$.
In both cases $d$ is strictly smaller than the Singleton bound $n-\big\lceil k/r\big\rceil+1$, which implies that the code is not an MDS code.
\end{proof}

\begin{remark}
\label{remark3.1}
In \cite{1}, J. Ma et al. establish necessary and sufficient conditions for the code $\mathscr{T}_k(\boldsymbol{\alpha}, t, h, \eta)$ to be additive MDS code with $t=2$ and $h=k-1$.
In \cite{2},  Yadav and Sharma establish necessary and sufficient conditions for the code $\mathscr{T}_k(\boldsymbol{\alpha}, t, h, \eta)$ to be additive MDS code with $t=1$ and $0 \leq h \leq k-1$.
Our results include both scenarios.
\end{remark}

The following two examples illustrate the above theorem when $t=3$ and $t=4$.

\begin{example}
\label{example3.1}
Let $t=3$, $q = 16$, $r = 3$, $n = 8$, $k = 8$ and $h = 7$. Let $z$ be a primitive element of $\mathbb{F}_{16}$, and let $g$ be a primitive element of $\mathbb{F}_{4096}$.
Let $\boldsymbol{\alpha} = (g^{1163}, g^{1158}, g^{2029}, g^{1240}, g^{1639}, g^{119}, g^{2252}, g^{1531}) $ $\in \mathbb{F}_{4096}^8$.
A Magma computation gives that $\Lambda  =  \Lambda _1$,
where $\Lambda _1$ denotes
\[
\left\{
\begin{aligned}
f_7 &= f_8^2, \\
\eta^{-1} &\in \big\{ f_6 - f_7 \cdot f_8 \,\big\vert\, I \subseteq \{1,2,\dots,8\},\ |I|=3 \big\} \\
&= \big\{1, z^4, z^6, z^{10}, z^{12}\big\}.
\end{aligned}
\right.
\]

Then let $\eta = z^8 \in \mathbb{F}_{16}$ and $\eta^{-1}=z^7$, i.e. $(\eta,\boldsymbol{\alpha}) \notin \Lambda $.
From Theorem 3.2 we know that $\mathscr{T}_8(\boldsymbol{\alpha},3,7,z^8)$ is an additive $[8,8,6]$-code over $\mathbb{F}_{4096}$, and hence it is an additive MDS code.
\end{example}

\begin{example}
\label{example3.2}
Let $t=4$, $q = 16$, $r = 2$, $n = 6$, $k = 6$ and $h = 5$. Let $z$ be a primitive element of $\mathbb{F}_{16}$, and let $g$ be a primitive element of $\mathbb{F}_{256}$.
Let $\boldsymbol{\alpha} = (g^{157}, g^{125}, g^{90}, g^{140}, g^{58}, g^{143}) \in \mathbb{F}_{256}^6$. 
A Magma computation gives that  $\Lambda  = \Lambda _0 \cup \Lambda _1$,
where $\Lambda _0$ denotes
\[
\begin{aligned}
\eta^{-1} &\in \big\{ f_2 + f_4^2 + f_4 f_5^2 + f_5^4 \,\big\vert\, I \subseteq \{1,2,\dots,6\},\ |I|=3 \big\} \\
&= \big\{ z^2,\,z^4,\,z^7,\,z^9,\,z^{10},\,z^{11},\,z^{12},\,z^{13},\,z^{14} \big\}
\end{aligned}
\]
and $\Lambda _1$ denotes
\[
\left\{
\begin{aligned}
f_5 &= f_6 \cdot f_7,\\ f_6 &= f_7^2, \\
\eta^{-1} &\in \big\{ f_4 - f_5 f_7 \,\big\vert\, I \subseteq \{1,2,\dots,6\},\ |I|=4 \big\} \\
&= \big\{ z^4,\,z^7 \big\}.
\end{aligned}
\right.
\]
Then let $\eta = z^9 \in \mathbb{F}_{16}$ and $\eta^{-1}=z^6$, i.e. $(\eta,\boldsymbol{\alpha}) \notin \Lambda $.
From Theorem 3.2 we know that $\mathscr{T}_6(\boldsymbol{\alpha},4,5,z^9)$ is an additive $[6,6,4]$-code over $\mathbb{F}_{256}$, and hence it is an additive MDS code.
\end{example}

In general, the dual code of an additive GTRS code fails to be an additive GTRS code. Nevertheless, under certain conditions, we identify a class of additive GTRS codes, whose dual codes are also additive GTRS codes, where $h=k-1$.

\begin{lemma}
\label{lemma3.3}
Let $r_1,r_2,\dots,r_N$ be pairwise distinct elements. Define the polynomial
\[
P(X)=\prod_{t=1}^N (X-r_t).
\]
Then for every integer $0\le m \le N-2$, we have
\[
\sum_{t=1}^N \frac{r_t^m}{P'(r_t)} = 0.
\]

\begin{proof}
Consider the polynomial $f(X)=X^m$. The Lagrange interpolation polynomial of $f(X)$ at the nodes $r_1,r_2,\dots,r_N$ is given by
\[
L(X)=\sum_{t=1}^N \frac{r_t^m \cdot P(X)}{(X-r_t)P'(r_t)}.
\]
It is clear that $\deg L \le N-1$, and $L(r_k)=r_k^m$ for $1 \leq k \leq N$.
Set $Q(X)=X^m-L(X)$. Then $Q(r_k)=0$ for $1 \leq k \leq N$, which means that $Q(X)$ has $N$ distinct roots.
Since $\deg Q\le N-1$, $Q(X)\equiv 0$. Consequently,
\[
X^m = L(X)=\sum_{t=1}^N \frac{r_t^m \cdot P(X)}{(X-r_t)P'(r_t)}.
\]
Comparing the coefficients of $X^{N-1}$ on both sides: each term $\frac{P(X)}{X-r_t}$ is monic of degree $N-1$, so the right-hand side contributes $\sum\limits_{t=1}^N \frac{r_t^m}{P'(r_t)}$; the left-hand side has no $X^{N-1}$ term since $m\le N-2$, we have
\[
\sum_{t=1}^N \frac{r_t^m}{P'(r_t)} = 0,\quad 0\le m\le N-2.
\]
\end{proof}
\end{lemma}

\begin{lemma}
\label{lemma3.4}
Let $n \le |\mathfrak{B}|$, $\boldsymbol{\alpha}=(\alpha_1,\alpha_2,\dots,\alpha_n)\in\mathbb{F}_{q^r}^n$, where no two elements among $\alpha_1,\alpha_2,\dots,\alpha_n$ form a conjugate pair over $\mathbb{F}_q$, and each $\alpha_i$ has exactly $r$ distinct conjugates over $\mathbb{F}_q$. Let $N=nr$. Denote all $\alpha_i$ and their conjugates by $r_1,r_2,\dots,r_N$, and define the polynomial
\[
P(X)=\prod_{t=1}^N (X-r_t).
\]
Since the roots $r_t$ are pairwise distinct, we have $P'(\alpha_i) \neq 0$.
Let $\boldsymbol{u}=(u_1,u_2,\dots,u_n)\in\mathbb{F}_{q^r}^n$ with $u_i=\dfrac{1}{P'(\alpha_i)}$. 
Define
\[
S_m=\sum_{i=1}^n \operatorname{Tr}\big(u_i \alpha_i^m\big)=\sum_{i=1}^n \operatorname{Tr}\left(\frac{\alpha_i^m}{P'(\alpha_i)}\right).
\]
Then
\[
S_m=0,\quad 0\le m\le nr-2 \quad and \quad S_{nr-1}=1.
\]

\begin{proof}
By Lemma \ref{lemma3.3}, for any integer $0\le m\le N-2$, we have
\[
\sum_{t=1}^N \frac{r_t^m}{P'(r_t)}=0.
\]
For each $i$, let the conjugacy class of $\alpha_i$ be
\[
\mathrm{Conj}(\alpha_i)=\big\{\alpha_i,\alpha_i^q,\alpha_i^{q^2},\dots,\alpha_i^{q^{r-1}}\big\}.
\]
We partition the summation over all nodes by conjugacy classes, which yields
\[
\sum_{i=1}^n \sum_{r\in \mathrm{Conj}(\alpha_i)} \frac{r^m}{P'(r)}=0.
\]
Since the set of roots of $P(X)$ is closed under the Frobenius automorphism, we have $P(X)\in\mathbb{F}_q[x]$ and then $P'(X)\in\mathbb{F}_q[x]$.
For any polynomial $M(X)\in\mathbb{F}_q[X]$, the Frobenius map satisfies $\big(M(X)\big)^{q^s}=M(X^{q^s})$, then we have
\[
\big(P'(X)\big)^{q^s}=P'(X^{q^s}).
\]
Substituting $X=\alpha_i$, we obtain
\[
\big(P'(\alpha_i)\big)^{q^s}=P'\big(\alpha_i^{q^s}\big),\quad 0\le s\le r-1.
\]
Using this automorphism property, we expand the trace term:
\[
\begin{aligned}
\operatorname{Tr}\left(\frac{\alpha_i^m}{P'(\alpha_i)}\right)
&=\frac{\alpha_i^m}{P'(\alpha_i)}+\left(\frac{\alpha_i^m}{P'(\alpha_i)}\right)^q+\cdots+\left(\frac{\alpha_i^m}{P'(\alpha_i)}\right)^{q^{r-1}}\\
&=\frac{\alpha_i^m}{P'(\alpha_i)}+\frac{(\alpha_i^q)^m}{P'(\alpha_i^q)}+\cdots+\frac{(\alpha_i^{q^{r-1}})^m}{P'(\alpha_i^{q^{r-1}})}\\
&=\sum_{r\in \mathrm{Conj}(\alpha_i)} \frac{r^m}{P'(r)}.
\end{aligned}
\]
Therefore,
\[
S_m=\sum_{i=1}^n \operatorname{Tr}\left(\frac{\alpha_i^m}{P'(\alpha_i)}\right)
=\sum_{i=1}^n \sum_{r\in \mathrm{Conj}(\alpha_i)} \frac{r^m}{P'(r)}=0
\]
for all integers $0\le m\le N-2=nr-2$.

We derive this directly using the same Lagrange interpolation identity as in the proof of Lemma 3.3. Taking \(f(X) = X^{N-1}\), since \(\deg(f) = N-1 < N\), the Lagrange interpolation formula yields:
\[
X^{N-1} = \sum_{t=1}^N \frac{r_t^{N-1} P(X)}{(X - r_t) P'(r_t)},
\]
where \(P(X) = \prod_{t=1}^N (X - r_t)\). Since \(\frac{P(X)}{X - r_t}\) is a monic polynomial of degree \(N-1\), comparing the coefficients of \(X^{N-1}\) on both sides gives:
\[
1 = \sum_{t=1}^N \frac{r_t^{N-1}}{P'(r_t)}.
\]
By the definition of \(S_m\), the right-hand side is exactly \(S_{N-1}\). Hence, \(S_{nr-1} = 1\).
\qedhere
\end{proof}
\end{lemma}

We next present a construction of additive codes such that both the code and its dual are additive GTRS codes.
We consider the case when $\boldsymbol{v} = (1,1,\dots,1)$.

\begin{theorem}
\label{theorem3.3}
Let $q$ be a prime power and $n,k,t$ and $r \geq 2$ be positive integers such that $n \le |\mathfrak{B}|$, $1 \leq k \leq nr-1$, $1 \leq t \leq nr-k$, $h=k-1$, $\eta \in \mathbb{F}_q^*$.
Let $\boldsymbol{\alpha}=(\alpha_1,\alpha_2,\dots,\alpha_n)\in \mathbb{F}_{q^r}^n$, where no two elements among $\alpha_1,\alpha_2,\dots,\alpha_n$ form a conjugate pair over $\mathbb{F}_q$ and each $\alpha_i$ has exactly $r$ distinct conjugates over $\mathbb{F}_q$.
Let $\boldsymbol{u}=(u_1,u_2,\dots,u_n)\in(\mathbb{F}_{q^r}^*)^n$ with $u_i=\dfrac{1}{P'(\alpha_i)},    $where $P(x)=\prod\limits_{i=1}^n\prod\limits_{s=0}^{r-1}(x-\alpha_i^{q^s})$.
Define $K=nr-k$, $S_m=\sum\limits_{i=1}^n \operatorname{Tr}\left(\frac{\alpha_i^m}{P'(\alpha_i)}\right)$ for any integer $m\ge 0$ and $D=S_{nr-1}+\eta S_{nr-1+t}$.
\begin{enumerate}[(a)]
\item If $D \neq 0$,
\[
\mathscr{T}_k(\boldsymbol{\alpha},\boldsymbol{1},t,k-1,\eta)^\perp
= \mathscr{T}_{nr-k}(\boldsymbol{\alpha},\boldsymbol{u},\boldsymbol{t'},\boldsymbol{h'},\boldsymbol{\eta'}),
\]
where $\boldsymbol{1}$ is all-one vector, $ \boldsymbol{t'} = (\,\overbrace{1,\,1,\,\dots,\,1}^{t}\,)$, $ \boldsymbol{h'}=(K-t,K-t+1,\dots,K-1)$, $ \boldsymbol{\eta'}=(-\frac{\eta}{D} S_{nr-1},-\frac{\eta}{D} S_{nr},\cdots,-\frac{\eta}{D} S_{nr+t-2})$.

\item If $D=0$,\[
\mathscr{T}_k(\boldsymbol{\alpha},\boldsymbol{1},t,k-1,\eta)^\perp
= \big\{\big(u_1 f(\alpha_1),\dots,u_n f(\alpha_n)\big): f(x) \in \mathfrak{L}_{nr-k}\big\},
\]
where
\[
\mathfrak{L}_{nr-k}
= \left\{
\sum_{i=0}^{K-1-t} a_i x^i+\sum_{i=K-t}^{K-2} a_i \left(S_{i-K+t+nr }x^{K-t}-x^{i+1}\right)
+a_{K-1}x^{K}
\;:\; a_i\in\mathbb{F}_q
\right\} \subseteq \mathbb{F}_q[x].
\]
\end{enumerate}
\end{theorem}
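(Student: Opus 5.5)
The plan is to verify that the proposed code is orthogonal to $\mathscr{T}_k(\boldsymbol{\alpha},\boldsymbol{1},t,k-1,\eta)$ and has $\mathbb{F}_q$-dimension $nr-k$. Equality then follows from $\dim_{\mathbb{F}_q}(\mathcal{C})+\dim_{\mathbb{F}_q}(\mathcal{C}^\perp)=nr$.

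\textbf{Reduction to the sums $S_m$.} Take $f,g\in\mathbb{F}_q[x]$ and write $fg=\sum_m e_m x^m$ with $e_m\in\mathbb{F}_q$. Since $\operatorname{Tr}$ is $\mathbb{F}_q$-linear, the trace inner product of $(u_if(\alpha_i))_i$ with $(g(\alpha_i))_i$ is
\[
\sum_{i=1}^n \operatorname{Tr}\bigl(u_i f(\alpha_i)g(\alpha_i)\bigr)=\sum_m e_m S_m .
\]
By Lemma \ref{lemma3.4}, $S_m=0$ for $0\le m\le nr-2$ and $S_{nr-1}=1$. So only the coefficients of $fg$ in degrees $\ge nr-1$ contribute. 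The code $\mathscr{T}_k(\boldsymbol{\alpha},\boldsymbol{1},t,k-1,\eta)$ is spanned by the images of
\[
x^j\ (0\le j\le k-2)\qquad\text{and}\qquad x^{k-1}+\eta x^{k-1+t}.
\]
Each orthogonality check is then a short degree count plus one explicit computation.

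\textbf{Case (a).} The dual polynomials are $x^j$ for $0\le j\le K-1-t$, together with $x^{K-t+l}+\eta'_l x^{K}$ for $0\le l\le t-1$. The checks are as follows.
\begin{enumerate}[(i)]
\item Products $x^j\cdot x^{j'}$ with $j\le k-2$, $j'\le K-1$ have degree $\le nr-3$, so they contribute nothing.
\item Products $x^j\cdot x^{K}$ with $j\le k-2$ have degree $\le nr-2$, so they also contribute nothing.
\item Products of $x^{k-1}+\eta x^{k-1+t}$ with $x^{j}$, $j\le K-1-t$, reach degree at most $nr-2$, so they vanish.
\item The only nontrivial pairing is $x^{k-1}+\eta x^{k-1+t}$ against $x^{K-t+l}+\eta'_lx^K$. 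It equals
\[
S_{nr-1-t+l}+\eta S_{nr-1+l}+\eta'_l\bigl(S_{nr-1}+\eta S_{nr-1+t}\bigr)=\eta S_{nr-1+l}+\eta'_l D ,
\]
because $nr-1-t+l\le nr-2$. This vanishes exactly when $\eta'_l=-\eta S_{nr-1+l}/D$, which is the stated value.
\end{enumerate}
The $K$ polynomials have pairwise distinct "lowest-type" degrees $0,\dots,K-1$, so they are $\mathbb{F}_q$-independent. All of them have degree $\le K<nr$, and the evaluation points are full Frobenius orbits, so the evaluation map (with nonzero multipliers $u_i$) is injective. The candidate code therefore has dimension $K=nr-k$. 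One should also remark that the pairs $(h'_j,t'_j)$ are distinct; if some $\eta'_l$ happens to be $0$, that twist simply disappears, which the span description still covers.

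\textbf{Case (b).} When $D=0$ the same bookkeeping applies to the basis $x^i$ ($i\le K-1-t$), $S_{i-K+t+nr}x^{K-t}-x^{i+1}$ ($K-t\le i\le K-2$), and $x^K$.
\begin{enumerate}[(i)]
\item $x^K$ against $x^{k-1}+\eta x^{k-1+t}$ gives exactly $D=0$.
\item For $i+1=K-t+i'$ with $1\le i'\le t-1$, $x^{K-t}$ pairs with $x^{k-1}+\eta x^{k-1+t}$ to give $\eta$, and $x^{i+1}$ pairs with it to give $\eta S_{nr-1+i'}$. Since $i-K+t+nr=nr-1+i'$, the combination cancels.
\item All pairings with $x^j$, $j\le k-2$, have degree $\le nr-2$ and vanish.
\end{enumerate}
Independence and the dimension count $K$ follow exactly as in (a).

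The main obstacle is the index bookkeeping in the single nontrivial pairing. One must check that every "junk" term lands in degree $\le nr-2$, so that it is killed by Lemma \ref{lemma3.4}, while the surviving terms match the prescribed $\boldsymbol{\eta'}$, respectively the coefficients $S_{i-K+t+nr}$. I would write these degree inequalities out carefully, using $K=nr-k$.
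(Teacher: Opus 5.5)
Your proposal is correct and follows essentially the same route as the paper: reduce each trace pairing to $\sum_m e_m S_m$, discard every term of degree at most $nr-2$ via Lemma~\ref{lemma3.4}, check that the single surviving pairing vanishes exactly for the prescribed $\boldsymbol{\eta'}$ (resp.\ because $D=0$), and conclude from the dimension count $k+(nr-k)=nr$. One small correction: in (b) independence is not literally ``as in (a)'', because all the middle basis polynomials contain $x^{K-t}$, so their lowest degrees coincide. Argue instead with their leading degrees $0,\dots,K-1-t,\,K-t+1,\dots,K-1,\,K$, which are pairwise distinct; equivalently, as the paper notes, each of these monomials occurs in exactly one basis polynomial.
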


\begin{proof}
\begin{enumerate}[(a)]
\item If $D \neq 0$, let $G$ be the generator matrix of $\mathscr{T}_k(\boldsymbol{\alpha},\boldsymbol{1},t,k-1,\eta)$
\[
G=
\begin{pmatrix}
1 & 1 & \cdots & 1 \\
\alpha_1 & \alpha_2 & \cdots & \alpha_n \\
\vdots & \vdots & \vdots & \vdots \\
\alpha_1^{k-2} & \alpha_2^{k-2} & \cdots & \alpha_n^{k-2} \\
\bigl(\alpha_1^{k-1}+\eta\alpha_1^{k-1+t}\bigr) & \bigl(\alpha_2^{k-1}+\eta\alpha_2^{k-1+t}\bigr) & \cdots & \bigl(\alpha_n^{k-1}+\eta\alpha_n^{k-1+t}\bigr)
\end{pmatrix}.
\]
We partition the rows of \(G\) into two groups:
\begin{itemize}
    \item \(G_1\) (ordinary rows): corresponding to \(x^i\) for \(0 \leq i \leq k-2\).
    \item \(G_2\) (special row): corresponding to \(x^{k-1} + \eta x^{k-1+t}\).
\end{itemize}
Then we consider the code \(\mathscr{T}_{nr-k}(\boldsymbol{\alpha}, \boldsymbol{u}, \boldsymbol{t}', \boldsymbol{h}', \boldsymbol{\eta}')\),  
where $ \boldsymbol{t'} = (\,\overbrace{1,\,1,\,\dots,\,1}^{t}\,)$, $ \boldsymbol{h'}=\{K-t,K-t+1,\dots,K-1\}$ and $ \boldsymbol{\eta'}=\{-\frac{\eta}{D} S_{nr-1},\cdots,-\frac{\eta}{D} S_{nr+t-2}\}$.
A generator matrix for this code is
\[
H=
\begin{pmatrix}
u_1 & u_2 & \cdots & u_n \\
u_1 \alpha_1 & u_2 \alpha_2 & \cdots & u_n \alpha_n \\
\vdots & \vdots & & \vdots \\
u_1 \alpha_1^{K-t-1} & u_2 \alpha_2^{K-t-1} & \cdots & u_n \alpha_n^{K-t-1} \\
u_1\left(\alpha_1^{K-t}-\dfrac{\eta}{D}\alpha_1^K\right) &
u_2\left(\alpha_2^{K-t}-\dfrac{\eta}{D}\alpha_2^K\right) &
\cdots &
u_n\left(\alpha_n^{K-t}-\dfrac{\eta}{D}\alpha_n^K\right) \\[6pt]
u_1\left(\alpha_1^{K-t+1}-\dfrac{\eta}{D}S_{nr}\alpha_1^K\right) &
u_2\left(\alpha_2^{K-t+1}-\dfrac{\eta}{D}S_{nr}\alpha_2^K\right) &
\cdots &
u_n\left(\alpha_n^{K-t+1}-\dfrac{\eta}{D}S_{nr}\alpha_n^K\right) \\
\vdots & \vdots & & \vdots \\
u_1\left(\alpha_1^{K-1}-\dfrac{\eta}{D}S_{nr+t-2}\alpha_1^K\right) &
u_2\left(\alpha_2^{K-1}-\dfrac{\eta}{D}S_{nr+t-2}\alpha_2^K\right) &
\cdots &
u_n\left(\alpha_n^{K-1}-\dfrac{\eta}{D}S_{nr+t-2}\alpha_n^K\right)
\end{pmatrix}
\]
where $D=S_{nr-1}+\eta S_{nr-1+t} \neq 0$ and $S_{nr-1}=1$ from Lemma \ref{lemma3.4}.
We partition its generator matrix \(H\) into two groups:
\begin{itemize}
    \item \(H_1\) (ordinary rows): corresponding to \(u_l x^j\) for \(0 \leq j \leq K-t-1\).
    \item \(H_2\) (special rows): corresponding to \(u_l (x^j + \eta'_j x^K)\) for \(K-t \leq j \leq K-1\). Here \(\eta'_j = -\frac{\eta}{D} S_{nr-1 + j - K + t}\), and \(D = S_{nr-1} + \eta S_{nr-1+t}\).
\end{itemize}

The $K$ rows of $H$ are linearly independent over $\mathbb{F}_q$. 
Indeed, after dividing the $l$-th column by $u_l\neq 0$, which does not affect the rank, the $j$-th row becomes the evaluation at $\boldsymbol{\alpha}$ of the polynomial $f_j=x^j$ for $0\le j\le K-t-1$ and of $f_j=x^j+\eta'_jx^K$ for $K-t\le j\le K-1$. 
Each monomial $x^j$ with $0\le j\le K-1$ occurs in $f_j$ with coefficient $1$ and in no other $f_{j'}$.
If $\sum_{j=0}^{K-1}c_jf_j$ vanished at $\alpha_1,\dots,\alpha_n$ with $c_j\in\mathbb{F}_q$, the resulting polynomial would have degree at most $K\le nr-1$ and would vanish at all $nr$ conjugates $\alpha_l^{q^s}$, hence would be zero; comparing the coefficients of $x^0,\dots,x^{K-1}$ then gives $c_0=\cdots=c_{K-1}=0$.

We verify the orthogonality ${\operatorname{Tr}}(GH\T) = 0$ by computing the trace inner products for all four block combinations:
\[
\operatorname{Tr}(G H\T) = 
\begin{pmatrix}
\operatorname{Tr}(G_1 H_1\T) & \operatorname{Tr}(G_1 H_2\T) \\
\operatorname{Tr}(G_2 H_1\T) & \operatorname{Tr}(G_2 H_2\T)
\end{pmatrix}.
\]

\textbf{Case 1: \(\operatorname{Tr}(G_1 H_1\T) = 0\).} 
The \((i, j)\)-th entry is \(\sum_{l=1}^n \operatorname{Tr}(\alpha_l^i \cdot u_l \alpha_l^j) = S_{i+j}\). 
Since \(0 \leq i \leq k-2\) and \(0 \leq j \leq K-t-1\), we have \(i+j \leq k-2 + nr-k-t-1 = nr-t-3 < nr-1\). 
By Lemma 3.4, \(S_m = 0\) for \(0 \leq m \leq nr-2\). Thus, all entries vanish.

\textbf{Case 2: \(\operatorname{Tr}(G_1 H_2\T) = 0\).} 
The \((i, j)\)-th entry is \(\sum_{l=1}^n \operatorname{Tr}(\alpha_l^i \cdot u_l (\alpha_l^j + \eta'_j \alpha_l^K)) = S_{i+j} + \eta'_j S_{i+K}\). 
Since \(i \leq k-2\), we have \(i+j \leq k-2 + K-1 = nr-3\) and \(i+K \leq k-2 + K = nr-2\). 
Thus all indices are at most \(nr-2\), and by Lemma 3.4, all entries are zero.

\textbf{Case 3: \(\operatorname{Tr}(G_2 H_1\T) = 0\).} 
The \(j\)-th entry is \(\sum_{l=1}^n \operatorname{Tr}((\alpha_l^{k-1} + \eta \alpha_l^{k-1+t}) \cdot u_l \alpha_l^j) = S_{k-1+j} + \eta S_{k-1+t+j}\). 
Since \(0 \leq j \leq K-t-1\), we have \(k-1+j \leq k-1 + nr-k-t-1 = nr-t-2 < nr-2\), and 
\(k-1+t+j \leq k-1+t + nr-k-t-1 = nr-2\). 
Thus, all indices are at most \(nr-2\), and by Lemma 3.4, all entries vanish.

\textbf{Case 4: \(\operatorname{Tr}(G_2 H_2\T) = 0\).} 
The \(j\)-th entry is 
\[
\sum_{l=1}^n \operatorname{Tr}((\alpha_l^{k-1} + \eta \alpha_l^{k-1+t}) \cdot u_l (\alpha_l^j + \eta'_j \alpha_l^K)).
\]
Expanding the product, we obtain four terms:
\[
S_{k-1+j} + \eta'_j S_{k-1+K} + \eta S_{k-1+t+j} + \eta \eta'_j S_{k-1+t+K}.
\]
Since \(k-1+K = nr-1\), \(k-1+t+K = nr-1+t\), and noting that \(S_m = 0\) for all \(m \leq nr-2\), the first term \(S_{k-1+j}\) vanishes because \(k-1+j \leq nr-2\). The remaining terms are:
\[
\eta'_j S_{nr-1} + \eta S_{k-1+t+j} + \eta \eta'_j S_{nr-1+t} = \eta'_j (S_{nr-1} + \eta S_{nr-1+t}) + \eta S_{k-1+t+j}.
\]
Substituting the definition of \(\eta'_j = -\frac{\eta}{D} S_{nr-1 + j - K + t}\) gives:
\[
\left(-\frac{\eta}{D} S_{nr-1 + j - K + t}\right) (S_{nr-1} + \eta S_{nr-1+t}) + \eta S_{nr-1 + j - K + t} = -\eta S_{nr-1 + j - K + t} + \eta S_{nr-1 + j - K + t} = 0.
\]
Thus, all entries of \(G_2 H_2\T\) are zero.

Combining the four cases, we have \(\operatorname{Tr}(GH\T) = 0\). Since the sum of their \(\mathbb{F}_q\)-dimension is \(k + (nr-k) = nr\), they are duals.

\item If $D=0$, let \(H\) be the generator matrix of this dual code, where
\[
H=
\begin{pmatrix}
u_1 & u_2 & \cdots & u_n \\
u_1\alpha_1 & u_2\alpha_2 & \cdots & u_n\alpha_n \\
\vdots & \vdots & & \vdots \\
u_1\alpha_1^{K-t-1} & u_2\alpha_2^{K-t-1} & \cdots & u_n\alpha_n^{K-t-1} \\
u_1\bigl(S_{nr}\alpha_1^{K-t}-\alpha_1^{K-t+1}\bigr) &
u_2\bigl(S_{nr}\alpha_2^{K-t}-\alpha_2^{K-t+1}\bigr) &
\cdots &
u_n\bigl(S_{nr}\alpha_n^{K-t}-\alpha_n^{K-t+1}\bigr) \\[6pt]
\vdots & \vdots & & \vdots \\
u_1\bigl(S_{nr+t-2}\alpha_1^{K-t}-\alpha_1^{K-1}\bigr) &
u_2\bigl(S_{nr+t-2}\alpha_2^{K-t}-\alpha_2^{K-1}\bigr) &
\cdots &
u_n\bigl(S_{nr+t-2}\alpha_n^{K-t}-\alpha_n^{K-1}\bigr) \\[6pt]
u_1\alpha_1^{K} & u_2\alpha_2^{K} & \cdots & u_n\alpha_n^{K}
\end{pmatrix}.
\]
We partition the rows of \(H\) into three groups:
\begin{itemize}
    \item \(H_1\) (ordinary rows): corresponding to \(u_l x^i\) for \(0 \leq i \leq K-t-1\).
    \item \(H_2\) (special rows): corresponding to \(u_l (S_{i-K+t+nr} x^{K-t} - x^{i+1})\) for \(K-t \leq i \leq K-2\).
    \item \(H_3\) (the last row): corresponding to \(u_l x^K\).
\end{itemize}

The $K$ rows of $H$ are linearly independent over $\mathbb{F}_q$. 
Indeed, after dividing the $l$-th column by $u_l\neq 0$, the rows are the evaluations at $\boldsymbol{\alpha}$ of the polynomials $x^i$ with $0\le i\le K-t-1$, of $S_{i-K+t+nr}x^{K-t}-x^{i+1}$ with $K-t\le i\le K-2$, and of $x^K$. 
The monomials $x^0,\dots,x^{K-t-1}$, $x^{K-t+1},\dots,x^{K-1}$ and $x^K$ are $K$ distinct monomials, and each of them occurs with a nonzero coefficient in exactly one of these polynomials. 
A linear relation among the rows would therefore produce a polynomial of degree at most $K\le nr-1$ vanishing at the $nr$ conjugates $\alpha_l^{q^s}$; this polynomial is zero, and comparing the coefficients of those $K$ monomials forces all coefficients of the relation to vanish.

Note that the \(\mathbb{F}_q\)-dimension of \(\mathfrak{L}_{nr-k}\) is exactly \((K-t) + (t-1) + 1 = K = nr-k\). 
We verify the orthogonality \(\operatorname{Tr}(GH\T) = 0\) by computing the trace inner products for the block combinations of \(G\) and \(H\):

\textbf{Case 1: \(\operatorname{Tr}(G_1 H_1\T) = 0\) and \(\operatorname{Tr}(G_1 H_3\T) = 0\).} 
For \(G_1\) (corresponding to \(x^j\) for \(0 \leq j \leq k-2\)), the entries are \(S_{j+i}\) for \(H_1\) and \(S_{j+K}\) for \(H_3\). Since \(i \leq K-t-1\) and \(j \leq k-2\), both \(j+i \leq nr-t-3\) and \(j+K \leq nr-2\). By Lemma 3.4, all terms vanish.

\textbf{Case 2: \(\operatorname{Tr}(G_1 H_2\T) = 0\).} 
The \((j, i)\)-th entry is \(\operatorname{Tr}(x^j \cdot u_l(S_{i-K+t+nr} x^{K-t} - x^{i+1})) = S_{i-K+t+nr} S_{j+K-t} - S_{j+i+1}\). Since \(j \leq k-2\) and \(i \leq K-2\), we have \(j+K-t \leq nr-t-2\) and \(j+i+1 \leq nr-3\). By Lemma 3.4, all entries are zero.

\textbf{Case 3: \(\operatorname{Tr}(G_2 H_1\T) = 0\).} 
For \(G_2\) (corresponding to \(x^{k-1} + \eta x^{k-1+t}\)), the \(i\)-th entry is \(S_{k-1+i} + \eta S_{k-1+t+i}\). Since \(i \leq K-t-1\), the maximum indices are \(nr-t-2\) and \(nr-2\), so all entries vanish.

\textbf{Case 4: \(\operatorname{Tr}(G_2 H_2\T) = 0\).} 
The \(i\)-th entry (\(K-t \leq i \leq K-2\)) is:
\[
\operatorname{Tr}\big((x^{k-1} + \eta x^{k-1+t}) \cdot u_l(S_{i-K+t+nr} x^{K-t} - x^{i+1})\big).
\]
Expanding the product, we obtain:
\[
S_{i-K+t+nr} S_{k-1+K-t} + \eta S_{i-K+t+nr} S_{k-1+t+K-t} - S_{k+i} - \eta S_{k+t+i}.
\]
Since \(k-1+K-t = nr-t-1\) and \(k-1+t+K-t = nr-1\), and noting that \(t \ge 1 \implies S_{nr-t-1} = 0\), \(S_{nr-1} = 1\), the first two terms reduce to \(\eta S_{i-K+t+nr}\). 
Observe that the index \(i-K+t+nr = i-(nr-k)+t+nr = k+t+i\). Substituting this yields:
\[
\eta S_{k+t+i} - S_{k+i} - \eta S_{k+t+i} = -S_{k+i}.
\]
Since \(i \leq K-2 = nr-k-2\), we have \(k+i \leq nr-2\), so \(S_{k+i} = 0\). Thus, this block vanishes.

\textbf{Case 5: \(\operatorname{Tr}(G_2 H_3\T) = 0\).} 
The entry is \(\operatorname{Tr}\big((x^{k-1} + \eta x^{k-1+t}) \cdot u_l x^K\big) = S_{k-1+K} + \eta S_{k-1+t+K}\). 
Since \(k-1+K = nr-1\) and \(k-1+t+K = nr-1+t\), this is exactly \(S_{nr-1} + \eta S_{nr-1+t} = D\). 
Because we are in the case \(D=0\), this entry is zero.

Combining all cases, we have \(\operatorname{Tr}(GH\T) = 0\). Since the sum of their \(\mathbb{F}_q\)-dimensions is \(k + (nr-k) = nr\), they are duals.
\end{enumerate}
\end{proof}

\begin{remark}
\label{remark3.2}
Note that \(\eta'_j \) may be zero. If \(\eta'_j = 0\), the corresponding twist term vanishes, and the dual code degenerates to a GTRS code with fewer twists, where the \(j\)-th components of the twist vector \(\boldsymbol{t}'\), hook vector \(\boldsymbol{h}'\), and coefficient vector \(\boldsymbol{\eta}'\) are synchronously deleted. This does not affect the orthogonality.
\end{remark}

The following  example illustrates the above theorem.

\begin{example}
\label{example3.3}
Let $t=4$, $q = 16$, $r = 2$, $n = 8$, $k = 8$, $h = 7$, then $K=8$.
Let $z$ be a primitive element of $\mathbb{F}_{16}$, and let $g$ be a primitive element of $\mathbb{F}_{256}$.
Let $\eta=z^5$ and $\boldsymbol{\alpha} = (g, g^{18}, g^{35}, g^{52}, g^{69}, g^{86}, g^{103}, g^{120}) \in \mathbb{F}_{256}^8$.
From Theorem 3.3 we have $D=z \ne 0$,
$ \boldsymbol{t'} = (1,1,1,1)$, $ \boldsymbol{h'}=\{4,5,6,7\}$, $ \boldsymbol{\eta'}=\{z^4,z^4,z^2,z^{11}\}$ and $\boldsymbol{u}=(g^{112}, g^{88}, g, g^{54}, g^{85}, g^{19}, g^{153}, g^{44})$.
Using Magma, we have $\mathscr{T}_8(\boldsymbol{\alpha},\boldsymbol{1},4,7,z^5)^\perp
= \mathscr{T}_{8}(\boldsymbol{\alpha},\boldsymbol{u},\boldsymbol{t'},\boldsymbol{h'},\boldsymbol{\eta'}),
$
where $\boldsymbol{1}$ is all-one vector.
Both codes are additive GTRS codes.
\end{example}

\section{ACD codes from additive GTRS codes over finite fields}
In this section we present a construction of ACD codes from GTRS codes.

\begin{theorem}
\label{theorem4.1}
Let $q$ be a prime power and $n,k,t$ and $r \geq 2$ be positive integers such that $n \le |\mathfrak{B}|$, $3 \leq k \leq nr-1$, $1 \leq t \leq nr-k$, $h=k-1$, $\eta \in \mathbb{F}_q^*$.
Let $\boldsymbol{\alpha}=(\alpha_1,\alpha_2,\dots,\alpha_n)\in \mathbb{F}_{q^r}^n$, where no two elements among $\alpha_1,\alpha_2,\dots,\alpha_n$ form a conjugate pair over $\mathbb{F}_q$ and each $\alpha_i$ has exactly $r$ distinct conjugates over $\mathbb{F}_q$.
Denote $T_m=\sum\limits_{i=1}^n \operatorname{Tr}(\alpha_i^m)\in \mathbb{F}_q$, $A=
\big(T_{i+j}\big)_{0\le i,j\le k-2}$ and suppose $\det(A) \neq 0$, $\boldsymbol{w} = \big(\, T_{k-1}+\eta T_{k-1+t},\; T_{k}+\eta T_{k+t},\;\dots,\; T_{2k-3}+\eta T_{2k-3+t} \,\big)\T$ and $\Delta=T_{2k-2}+2\eta T_{2k-2+t}+\eta^2 T_{2k-2+2t}-\boldsymbol{w}\T A^{-1}\boldsymbol{w}$.
Then $\mathscr{T}_k(\boldsymbol{\alpha},1,t,k-1,\eta)$ is an ACD code if and only if $\Delta \neq 0$.

\begin{proof}
Let $G$ be the generator matrix of $\mathscr{T}_k(\boldsymbol{\alpha},\boldsymbol{1},t,k-1,\eta)$
\[
G=
\begin{pmatrix}
1 & 1 & \cdots & 1 \\
\alpha_1 & \alpha_2 & \cdots & \alpha_n \\
\vdots & \vdots & \vdots & \vdots \\
\alpha_1^{k-2} & \alpha_2^{k-2} & \cdots & \alpha_n^{k-2} \\
\bigl(\alpha_1^{k-1}+\eta\alpha_1^{k-1+t}\bigr) & \bigl(\alpha_2^{k-1}+\eta\alpha_2^{k-1+t}\bigr) & \cdots & \bigl(\alpha_n^{k-1}+\eta\alpha_n^{k-1+t}\bigr)
\end{pmatrix}.
\]
Then
\[
\operatorname{Tr}\bigl(GG\T\bigr)=
\begin{pmatrix}
T_0 & T_1 & \cdots & T_{k-2} & T_{k-1}+\eta T_{k-1+t} \\
T_1 & T_2 & \cdots & T_{k-1} & T_{k}+\eta T_{k+t} \\
\vdots & \vdots & & \vdots & \vdots \\
T_{k-2} & T_{k-1} & \cdots & T_{2k-4} & T_{2k-3}+\eta T_{2k-3+t} \\
T_{k-1}+\eta T_{k-1+t} & T_{k}+\eta T_{k+t} & \cdots & T_{2k-3}+\eta T_{2k-3+t} & T_{2k-2}+2\eta T_{2k-2+t}+\eta^2 T_{2k-2+2t}
\end{pmatrix}.
\]
Let
\[
\operatorname{Tr}\bigl(GG\T\bigr)
=
\begin{pmatrix}
\boldsymbol{A} & \boldsymbol{w} \\
\boldsymbol{w}\T & d
\end{pmatrix},
\]
where
\[
\boldsymbol{A}=
\begin{pmatrix}
T_0 & T_1 & \cdots & T_{k-2} \\
T_1 & T_2 & \cdots & T_{k-1} \\
\vdots & \vdots & \ddots & \vdots \\
T_{k-2} & T_{k-1} & \cdots & T_{2k-4}
\end{pmatrix}= \big(T_{i+j}\big)_{0\le i,j\le k-2}
,
\boldsymbol{w}=
\begin{pmatrix}
T_{k-1}+\eta T_{k-1+t} \\
T_{k}+\eta T_{k+t} \\
\vdots \\
T_{2k-3}+\eta T_{2k-3+t}
\end{pmatrix},
\]
and $d = T_{2k-2}+2\eta T_{2k-2+t}+\eta^2 T_{2k-2+2t}$.
Let
\[
M = \operatorname{Tr}\bigl(GG\T\bigr)
=
\begin{pmatrix}
A & \boldsymbol{w} \\
\boldsymbol{w}\T & d
\end{pmatrix}.
\]
Since the submatrix $A$ is invertible, the inverse matrix $A^{-1}$ exists. We perform block elementary row elimination by multiplying a block lower triangular matrix on the left:
\[
\begin{pmatrix}
\boldsymbol{I}_{k-1} & \boldsymbol{0} \\
-\boldsymbol{w}\T A^{-1} & 1
\end{pmatrix}
\begin{pmatrix}
A & \boldsymbol{w} \\
\boldsymbol{w}\T & d
\end{pmatrix}
=
\begin{pmatrix}
A & \boldsymbol{w} \\
\boldsymbol{0} & d - \boldsymbol{w}\T A^{-1}\boldsymbol{w}
\end{pmatrix},
\]
where $\boldsymbol{I}_{k-1}$ denotes the $(k-1)$-order identity matrix.

Then
\[
\det(M) = \det(A)\cdot \bigl(d - \boldsymbol{w}\T A^{-1}\boldsymbol{w}\bigr).
\]
Let $\Delta=d - \boldsymbol{w}\T A^{-1}\boldsymbol{w}=T_{2k-2}+2\eta T_{2k-2+t}+\eta^2 T_{2k-2+2t}-\boldsymbol{w}\T A^{-1}\boldsymbol{w}$.
If $\det(A)\neq 0$ and the Schur complement $\Delta \neq 0$, then $\det(M)\neq 0$.
Combined with Lemma \ref{lemma2.2}, the code $\mathscr{T}_k(\boldsymbol{\alpha},\boldsymbol{1},t,k-1,\eta)$ is an ACD code.

Conversely, if $\mathscr{T}_k(\boldsymbol{\alpha},1,t,k-1,\eta)$ is an ACD code, then $\det(M) \neq 0$. Since $\det(A) \neq 0$, we immediately have
$\Delta \neq 0$.
\end{proof}

\end{theorem}

Next, we prove the existence of the evaluation points \(\boldsymbol{\alpha}\) such that the matrix \(A\) in Theorem 4.1 is invertible and \(\Delta \neq 0\).

\begin{remark}
\label{remark4.1}
Let $T_m=\sum\limits_{i=1}^n \operatorname{Tr}(\alpha_i^m)$ and     $A=\bigl(T_{i+j}\bigr)_{0\le i,j\le k-2}$, where $\boldsymbol{\alpha}=(\alpha_1,\alpha_2,\dots,\alpha_n)\in \mathbb{F}_{q^r}^n$.
Let $\boldsymbol{w}_0 = \big( T_{k-1},T_{k},\dots, T_{2k-3} \,\big)\T$ and $C_0=\bigl(T_{2k-2}-\boldsymbol{w}_0\T A^{-1}\boldsymbol{w}_0\bigr)$ when the matrix $A$ is invertible. 
Let $B_0 = (T_{i+j})_{0 \le i,j \le k-1}$
be the \(k \times k\) Hankel matrix obtained from \(A\) by adjoining the row and the column indexed by \(k-1\). 
By applying the Schur complement formula to the block partition of $B_0$: 
\[
\det(B_0)=\det(A)\bigl(T_{2k-2}-\boldsymbol{w}_0\T A^{-1}\boldsymbol{w}_0\bigr)=\det(A)\,C_0.
\]
\end{remark}
\begin{lemma}
\label{lemma4.1}
Fix positive integers $n,r,k$ with $r\ge 2$ and $k\le nr$, and let $N=nr$.
Then for every sufficiently large prime power \(q\) (allowing the exclusion of finitely many characteristics), there exist $\alpha_1,\dots,\alpha_n\in\mathbb{F}_{q^r}$ such that no two elements among $\alpha_1,\alpha_2,\dots,\alpha_n$ form a conjugate pair over $\mathbb{F}_q$, each $\alpha_i$ has exactly $r$ distinct conjugates over $\mathbb{F}_q$, and
\[
\det(A)\neq 0
\qquad\text{and}\qquad
\det(B_0)\neq 0,
\]
where $A=\bigl(T_{i+j}\bigr)_{0\le i,j\le k-2}$, $B_0=\bigl(T_{i+j}\bigr)_{0\le i,j\le k-1}$ and $T_m=\sum\limits_{i=1}^n \operatorname{Tr}(\alpha_i^m)$.
\end{lemma}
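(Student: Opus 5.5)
The plan is to turn both determinants into symmetric polynomials in the $N=nr$ conjugates and then use a counting (Schwarz--Zippel) argument over $\mathbb{F}_q$. Write $r_1,\dots,r_N$ for the elements $\alpha_i^{q^j}$ ($1\le i\le n$, $0\le j\le r-1$). As in the proof of Lemma \ref{lemma3.4}, $T_m=\sum_{t=1}^N r_t^m$. Let $W_k$ be the $N\times k$ Vandermonde matrix $(r_t^{\,j})_{1\le t\le N,\,0\le j\le k-1}$. Then $B_0=W_k\T W_k$ and $A=W_{k-1}\T W_{k-1}$. By the Cauchy--Binet formula,
\[
\det(B_0)=F_k(r_1,\dots,r_N):=\sum_{\substack{J\subseteq\{1,\dots,N\}\\ |J|=k}}\ \prod_{\substack{a<b\\ a,b\in J}}(r_a-r_b)^2 ,
\]
and likewise $\det(A)=F_{k-1}(r_1,\dots,r_N)$, with $F_0=1$ if $k=1$. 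Each $F_j$ is a symmetric polynomial with integer coefficients. Its degree $k(k-1)$ does not depend on $q$.

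The first step is to show that $F_j$ is not the zero polynomial modulo $p$ for all but finitely many primes $p$, for $j\le k\le N$. Over $\mathbb{Q}$ this is clear: at any $N$ distinct real numbers every summand of $F_j$ is strictly positive, so $F_j\neq 0$ in $\mathbb{Z}[x_1,\dots,x_N]$. Hence $F_j\not\equiv 0 \pmod p$ unless $p$ divides the content of $F_j$. The primes dividing the contents of $F_{k-1}$ and $F_k$ form the finite set of excluded characteristics.

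Next, I express everything in terms of coefficients that live in $\mathbb{F}_q$. Since $F_j$ is symmetric, $F_j=\Phi_j(e_1,\dots,e_N)$ for an integer polynomial $\Phi_j$ in the elementary symmetric functions, and $e_\ell$ are (up to sign) the coefficients of $P(X)=\prod_{i=1}^n m_i(X)$, where $m_i$ is the minimal polynomial of $\alpha_i$. Introduce variables $\mathbf{c}=(c_{i,\ell})_{1\le i\le n,\,0\le\ell\le r-1}$ for the coefficients of $n$ monic degree-$r$ polynomials $m_i(X)=X^r+\sum_\ell c_{i,\ell}X^\ell$. Then $H(\mathbf{c}):=\Phi_{k-1}\Phi_k(e(\mathbf{c}))$ is a polynomial over $\mathbb{F}_p$ in $nr$ variables, of degree bounded by a constant $D$ depending only on $n,r,k$. $H$ is nonzero: over $\overline{\mathbb{F}}_p$ choose $N$ distinct roots where $F_{k-1}F_k\neq0$ (possible since $F_{k-1}F_k\not\equiv0$ mod $p$), and group them into $n$ monic factors of degree $r$. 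This gives a point $\mathbf{c}\in\overline{\mathbb{F}}_p^{\,nr}$ with $H(\mathbf{c})\neq0$.

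Finally, I count. By the Schwarz--Zippel lemma, $H$ has at most $Dq^{nr-1}$ zeros in $\mathbb{F}_q^{nr}$. The number of $n$-tuples of pairwise distinct monic irreducible polynomials of degree $r$ over $\mathbb{F}_q$ is $|\mathfrak{B}|(|\mathfrak{B}|-1)\cdots(|\mathfrak{B}|-n+1)$. Since $|\mathfrak{B}|=\frac1r\sum_{d\mid r}\mu(d)q^{r/d}\ge q^r/r-O(q^{r/2})$, this count is at least $c\,q^{nr}$ for large $q$, with some $c>0$ depending only on $n,r$. Hence for $q$ large there is such a tuple with $H(\mathbf{c})\neq0$. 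Choosing $\alpha_i$ as a root of $m_i$ gives points with full, pairwise distinct Frobenius orbits and $\det(A)\det(B_0)\neq0$.

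The main obstacle is the second step: certifying that $F_j$ stays nonzero after reduction mod $p$ and that this survives passage to $\Phi_j$ in the coefficients. The positivity over $\mathbb{R}$ and the surjectivity of "factor into degree-$r$ pieces" over the algebraic closure are what I would rely on. The counting step is routine once the degree bound $D$ is uniform in $q$, which holds because $\Phi_j$ is built from the coefficients and not from the conjugates $\alpha_i^{q^j}$.
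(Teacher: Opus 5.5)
Your proposal is correct and follows essentially the same route as the paper: you realize $\det(A)$ and $\det(B_0)$ as nonzero integer symmetric polynomials in the $N$ conjugates, rewrite them in terms of coefficients, discard the finitely many bad characteristics, and beat the Schwartz--Zippel zero count with the $\gg q^{N}$ admissible choices of distinct degree-$r$ irreducibles (you get nonvanishing from Cauchy--Binet as a sum of squared Vandermonde minors, the paper from positive definiteness of $UU\T$ over $\mathbb{R}$). The only real difference is cosmetic: you parametrize by the $nr$ coefficients of the individual factors $m_i$ (ordered tuples) rather than by the $N$ coefficients of their product, which costs you the extra check that the composite polynomial $H$ is nonzero mod $p$, and your grouping-of-roots argument over $\overline{\mathbb{F}}_p$ supplies that check correctly.
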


\begin{proof}
Let $s=k-1$. Assume that the conjugacy orbits of $\alpha_1,\dots,\alpha_n$ are pairwise disjoint and denote
\[
\{\beta_1,\dots,\beta_N\}=\bigcup_{i=1}^n\{\alpha_i,\alpha_i^q,\dots,\alpha_i^{q^{r-1}}\}.
\]
Then
\[
T_m
= \sum_{i=1}^n \operatorname{Tr}(\alpha_i^m)
= \sum_{i=1}^n \sum_{j=0}^{r-1} \bigl(\alpha_i^{q^j}\bigr)^m
= \sum_{\ell=1}^N \beta_\ell^m,
\]
and therefore
\[
A=\Bigl(\sum_{\ell=1}^N \beta_\ell^{i+j}\Bigr)_{0\le i,j\le s-1},
\qquad
B_0=\Bigl(\sum_{\ell=1}^N \beta_\ell^{i+j}\Bigr)_{0\le i,j\le s}.
\]
Define the $s\times N$ matrix $V=\bigl(\beta_\ell^{i}\bigr)_{0\le i\le s-1,\,1\le\ell\le N}$ and the $(s+1)\times N$ matrix $U=\bigl(\beta_\ell^{i}\bigr)_{0\le i\le s,\,1\le\ell\le N}$. By direct computation we have
\[
A=VV\T,\qquad B_0=UU\T .
\]
Note that over a finite field one cannot deduce that $VV\T$ or $UU\T$ is invertible merely from the fact that $V$ and $U$ have full row rank. We therefore employ the polynomial counting method.

Temporarily treat $\beta_1,\dots,\beta_N$ as independent indeterminates, and define
\[
D_A(\beta_1,\dots,\beta_N)=\det\Bigl(\sum_{\ell=1}^N \beta_\ell^{i+j}\Bigr)_{0\le i,j\le s-1},
\qquad
D_B(\beta_1,\dots,\beta_N)=\det\Bigl(\sum_{\ell=1}^N \beta_\ell^{i+j}\Bigr)_{0\le i,j\le s}.
\]
We first prove that $D_A$ and $D_B$ are nonzero polynomials with integer coefficients.

In the real number field, choose mutually distinct positive real numbers $0<\beta_1<\beta_2<\cdots<\beta_N$. Since $N\ge s+1$, the first $s+1$ columns of $U$ form an $(s+1)\times (s+1)$ Vandermonde matrix whose determinant is $\prod\limits_{1\le i<j\le s+1}(\beta_j-\beta_i)\neq 0$; hence $\operatorname{rank}_{\mathbb{R}}U=s+1$. For any nonzero column vector $x\in\mathbb{R}^{s+1}$ we then have $x\T UU\T x=\|U\T x\|^2>0$, so $UU\T$ is positive definite over $\mathbb{R}$ and
\[
D_B(\beta_1,\dots,\beta_N)=\det(UU\T)>0 .
\]
The same argument applied to $V$ gives $D_A(\beta_1,\dots,\beta_N)=\det(VV\T)>0$. This shows that $D_A$ and $D_B$ are nonzero polynomials with integer coefficients.

Since each power sum $\sum\limits_{\ell=1}^N \beta_\ell^m$ is a symmetric polynomial in $\beta_1,\dots,\beta_N$, both $D_A$ and $D_B$ are symmetric polynomials. Consider the monic polynomial
\[
F(X)=\prod_{\ell=1}^N (X-\beta_\ell)=X^N+c_1X^{N-1}+\cdots+c_N .
\]
By the fundamental theorem of symmetric polynomials, there exist nonzero polynomials
\[
P_A(c_1,\dots,c_N),\ P_B(c_1,\dots,c_N)\in\mathbb{Z}[c_1,\dots,c_N]
\]
such that
\[
D_A(\beta_1,\dots,\beta_N)=P_A(c_1,\dots,c_N),\qquad
D_B(\beta_1,\dots,\beta_N)=P_B(c_1,\dots,c_N).
\]
Let $d_A$ and $d_B$ be their total degrees; both depend only on $N=nr$ and $s=k-1$, hence only on $n,r,k$, but not on $q$. 
Since $P_AP_B$ is a nonzero polynomial with integer coefficients, only finitely many primes $p$ satisfy $P_AP_B\equiv 0\pmod p$. We fix a characteristic $p$ outside this finite set and from now on take $q=p^f$ for some integer $f\ge 1$.

Let $I_q(r)$ denote the number of monic irreducible polynomials of degree $r$ in $\mathbb{F}_q[X]$. It is known that
\[
I_q(r)=\frac{1}{r}\sum_{e\mid r}\mu(e)q^{r/e}=\frac{q^r}{r}+O(q^{r/2})
\]
as $q\to\infty$ with $r$ fixed. Hence, for all sufficiently large prime powers q of characteristics outside a finite set, $I_q(r)\ge n$, and the number of monic polynomials of degree $N$ which are products of $n$ distinct monic irreducible polynomials of degree $r$ satisfies
\[
\binom{I_q(r)}{n}=\frac{q^{N}}{n!\,r^n}+O(q^{N-1})\ \ge\ c\,q^{N}
\]
for some constant $c>0$ independent of $q$, where one may take $c=\frac{1}{2n!\,r^n}$ for $q$ large enough; in particular $c$ depends only on $n$ and $r$, not on $q$. Likewise, the number $I_q(r)$ exceeds $n$ as soon as $q$ is larger than a bound depending only on $n$ and $r$. 
Every such $F$ is squarefree and has exactly $n$ distinct irreducible factors of degree $r$. Moreover, the coefficient vector $(c_1,\dots,c_N)$ determines $F$ uniquely, so distinct such polynomials give distinct points of $\mathbb{F}_q^N$.

Since $P_AP_B\not\equiv 0$ over $\mathbb{F}_q$ and $\deg(P_AP_B)=d_A+d_B$, the Schwartz--Zippel root estimate over finite fields yields
\[
\bigl|\{(c_1,\dots,c_N)\in\mathbb{F}_q^N : P_A(c_1,\dots,c_N)P_B(c_1,\dots,c_N)=0\}\bigr|
\le (d_A+d_B)\,q^{N-1}.
\]
For $q$ sufficiently large such that $c\,q^{N}>(d_A+d_B)q^{N-1}$, that is $q>\frac{d_A+d_B}{c}$, not all admissible coefficient vectors lie in the zero set of $P_AP_B$. 
Therefore, there exists an admissible polynomial $F(X)=f_1(X)\cdots f_n(X)$ with
\[
P_A(c_1,\dots,c_N)\neq 0
\qquad\text{and}\qquad
P_B(c_1,\dots,c_N)\neq 0 .
\]

For each $i$, choose a root $\alpha_i$ of $f_i(X)$ in $\mathbb{F}_{q^r}$. Since $f_i$ is irreducible over $\mathbb{F}_q$ of degree $r$, we have $[\mathbb{F}_q(\alpha_i):\mathbb{F}_q]=r$, so $\alpha_i$ has exactly $r$ distinct conjugates $\alpha_i,\alpha_i^q,\dots,\alpha_i^{q^{r-1}}$. Since $f_1,\dots,f_n$ are pairwise distinct, the Frobenius orbits of distinct $\alpha_i$ are disjoint; in other words, for $i\neq j$ the elements $\alpha_i$ and $\alpha_j$ are not conjugates over $\mathbb{F}_q$. Finally, $\{\beta_1,\dots,\beta_N\}$ is precisely the union of these orbits, and hence
\[
\det(A)=D_A(\beta_1,\dots,\beta_N)=P_A(c_1,\dots,c_N)\neq 0,
\det(B_0)=D_B(\beta_1,\dots,\beta_N)=P_B(c_1,\dots,c_N)\neq 0 .
\]
This completes the proof.
\end{proof}

\begin{corollary}\label{corollary4.1}
Fix positive integers \(n, r, k, t\) with \(r \ge 2\), \(3 \le k \le nr - 1\), and \(1 \le t \le nr - k\). For every sufficiently large prime power \(q\) ($q \geq 4$) whose characteristic avoids a finite set, there exist \(\alpha_1, \dots, \alpha_n \in \mathbb{F}_{q^r}\) lying in pairwise distinct full-degree Frobenius orbits and an \(\eta \in \mathbb{F}_q^*\) such that \(\mathscr{T}_k(\boldsymbol{\alpha}, \mathbf{1}, t, k - 1, \eta)\) is an ACD code.
\end{corollary}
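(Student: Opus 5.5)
The plan is to combine Lemma \ref{lemma4.1} with Theorem \ref{theorem4.1}, using the freedom in $\eta\in\mathbb{F}_q^*$ to force $\Delta\neq 0$. First apply Lemma \ref{lemma4.1} with the given $n,r,k$ (note $k\le nr-1<nr$). For every sufficiently large $q$ whose characteristic avoids a finite set, it produces $\alpha_1,\dots,\alpha_n$ in pairwise distinct full-degree Frobenius orbits with $\det(A)\neq 0$ and $\det(B_0)\neq 0$. By Remark \ref{remark4.1}, $\det(B_0)=\det(A)\,C_0$, so
\[
C_0=T_{2k-2}-\boldsymbol{w}_0\T A^{-1}\boldsymbol{w}_0\neq 0 .
\]
The hypothesis $\det(A)\neq 0$ of Theorem \ref{theorem4.1} now holds. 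It therefore remains to find $\eta\in\mathbb{F}_q^*$ with $\Delta(\eta)\neq 0$.

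Next, regard $\Delta$ as a polynomial in $\eta$. Write $\boldsymbol{w}=\boldsymbol{w}_0+\eta\boldsymbol{w}_1$ with $\boldsymbol{w}_1=(T_{k-1+t},\dots,T_{2k-3+t})\T$. Since $A$ is symmetric, expanding gives
\[
\Delta(\eta)=C_0+2\eta\bigl(T_{2k-2+t}-\boldsymbol{w}_0\T A^{-1}\boldsymbol{w}_1\bigr)+\eta^2\bigl(T_{2k-2+2t}-\boldsymbol{w}_1\T A^{-1}\boldsymbol{w}_1\bigr).
\]
This is a polynomial in $\eta$ over $\mathbb{F}_q$ of degree at most $2$. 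Its constant term $\Delta(0)=C_0$ is nonzero, so it is not the zero polynomial and has at most $2$ roots in $\mathbb{F}_q$. The set $\mathbb{F}_q^*$ has $q-1\ge 3$ elements because $q\ge 4$. Hence some $\eta\in\mathbb{F}_q^*$ satisfies $\Delta(\eta)\neq 0$.

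Finally, Theorem \ref{theorem4.1} gives that $\mathscr{T}_k(\boldsymbol{\alpha},\mathbf{1},t,k-1,\eta)$ is an ACD code; its remaining hypotheses $3\le k\le nr-1$, $1\le t\le nr-k$ and $n\le|\mathfrak{B}|$ hold, the last being implied by the existence of the $\alpha_i$. The argument has essentially no hard step. The only point needing care is that $\Delta(\eta)$ is not identically zero, and the constant term $C_0\neq 0$ settles this. That is why the corollary needs the nonvanishing of $\det(B_0)$ from Lemma \ref{lemma4.1} and not just that of $\det(A)$. The bound $q\ge 4$ ensures $\mathbb{F}_q^*$ cannot be covered by the at most two roots.
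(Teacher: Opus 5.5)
Your proposal is correct and follows the paper's proof essentially step for step. Lemma~\ref{lemma4.1} together with $\det(B_0)=\det(A)\,C_0$ from Remark~\ref{remark4.1} gives $C_0=\Delta(0)\neq 0$, so $\Delta(\eta)$ is a nonzero polynomial of degree at most $2$; then $q-1\ge 3$ lets you pick $\eta\in\mathbb{F}_q^*$ outside its roots and invoke Theorem~\ref{theorem4.1}. The only cosmetic difference is that you merge the two cross terms using the symmetry of $A$, while the paper keeps them separate in $C_1$.
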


\begin{proof}
By Lemma \ref{lemma4.1}, there exist \(\alpha_1, \dots, \alpha_n \in \mathbb{F}_{q^r}\) lying in pairwise distinct full-degree Frobenius orbits s.t. $\det(A)\neq 0$, $\det(B_0)\neq 0$.
Since $\det(B_0)=\det(A)\,C_0$, $C_0$ is nonzero.
We expand $\Delta$ in powers of $\eta$.
Recall that
\[
\Delta = d - \boldsymbol{w}\T A^{-1}\boldsymbol{w},
\]
where
\[
d = T_{2k-2} + 2\eta T_{2k-2+t} + \eta^2 T_{2k-2+2t},\quad
\boldsymbol{w} = \boldsymbol{w}_0 + \eta \boldsymbol{w}_1,
\]
with constant vectors independent of $\eta$
\[
\boldsymbol{w}_0 = \begin{pmatrix} T_{k-1} \\ T_{k} \\ \vdots \\ T_{2k-3} \end{pmatrix},\quad
\boldsymbol{w}_1 = \begin{pmatrix} T_{k-1+t} \\ T_{k+t} \\ \vdots \\ T_{2k-3+t} \end{pmatrix}.
\]
Then
\[
\begin{aligned}
\boldsymbol{w}\T A^{-1}\boldsymbol{w}
&= (\boldsymbol{w}_0 + \eta \boldsymbol{w}_1)\T A^{-1}(\boldsymbol{w}_0 + \eta \boldsymbol{w}_1) \\
&= (\boldsymbol{w}_0\T + \eta \boldsymbol{w}_1\T)A^{-1}(\boldsymbol{w}_0 + \eta \boldsymbol{w}_1) \\
&= \boldsymbol{w}_0\T A^{-1}\boldsymbol{w}_0
+ \boldsymbol{w}_0\T A^{-1}(\eta \boldsymbol{w}_1)
+ \eta \boldsymbol{w}_1\T A^{-1}\boldsymbol{w}_0
+ \eta \boldsymbol{w}_1\T A^{-1}(\eta \boldsymbol{w}_1) \\
&= \boldsymbol{w}_0\T A^{-1}\boldsymbol{w}_0
+ \eta\big(\boldsymbol{w}_0\T A^{-1}\boldsymbol{w}_1 + \boldsymbol{w}_1\T A^{-1}\boldsymbol{w}_0\big)
+ \eta^2 \boldsymbol{w}_1\T A^{-1}\boldsymbol{w}_1.
\end{aligned}
\]
We have
\[
\begin{aligned}
\Delta(\eta)
&= \big(T_{2k-2} + 2T_{2k-2+t}\eta + T_{2k-2+2t}\eta^2\big) \\
&\quad - \Big[\boldsymbol{w}_0\T A^{-1}\boldsymbol{w}_0
+ \eta\big(\boldsymbol{w}_0\T A^{-1}\boldsymbol{w}_1 + \boldsymbol{w}_1\T A^{-1}\boldsymbol{w}_0\big)
+ \eta^2 \boldsymbol{w}_1\T A^{-1}\boldsymbol{w}_1\Big] \\
&= \underbrace{\Big(T_{2k-2} - \boldsymbol{w}_0\T A^{-1}\boldsymbol{w}_0\Big)}_{C_0}
+ \underbrace{\Big(2T_{2k-2+t} - \boldsymbol{w}_0\T A^{-1}\boldsymbol{w}_1 - \boldsymbol{w}_1\T A^{-1}\boldsymbol{w}_0\Big)}_{C_1}\,\eta
+ \underbrace{\Big(T_{2k-2+2t} - \boldsymbol{w}_1\T A^{-1}\boldsymbol{w}_1\Big)}_{C_2}\,\eta^2.
\end{aligned}
\]
Define constants $C_0,C_1,C_2\in\mathbb{F}_q$ as above, we have $\Delta(\eta) = C_2 \eta^2 + C_1 \eta + C_0$, where $C_0,C_1,C_2$ and $\Delta(\eta)$ depend on the evaluation points \(\boldsymbol{\alpha}\). 
Since $C_0$ is nonzero, $\Delta$ is not the zero polynomial. 
Since \(\Delta\) is a nonzero polynomial of degree at most two and \(|\mathbb{F}_q^*|=q-1\geq 3\), one may choose \(\eta \in \mathbb{F}_q^*\) outside its set of roots. The conclusion then follows from Theorem \ref{theorem4.1}.
\end{proof}

The following  example illustrates the above theorem.

\begin{example}
\label{example4.1}
Let $t=3$, $q = 25$, $r = 2$, $n = 10$, $k = 12$, $h = 11$.
Let $z$ be a primitive element of $\mathbb{F}_{25}$, and let $g$ be a primitive element of $\mathbb{F}_{625}$.
Let $\boldsymbol{\alpha} = (g^{190}, g^{320}, g^{248}, g^{239}, g^{134}, g^{114}, g^{19}, g^{262}, g^{11}, g^{224} )$ $ \in \mathbb{F}_{625}^{10}$, then $\det(A)=z^9 \ne 0$.
Using Magma, we have $\Delta(\eta) = z^4 \eta^2 + z^{17} \eta + z^{14}$ and $\Delta(\eta)=0$ when $\eta=4$ or $\eta=z^{22}$ (here $z^{12} = -1 = 4$ in $\mathbb{F}_{25}$).
Let $\eta=z^5$, we have $\Delta(\eta) \ne 0$.
From Theorem 4.1, we have $\mathscr{T}_{12}(\boldsymbol{\alpha},\boldsymbol{1},3,11,z^5)$ is an ACD code.
\end{example}

\begin{remark}
\label{remark4.2}
For \(0\le h\le k-1\), swapping the \((h+1)\)-th row with the last row and simultaneously swapping the \((h+1)\)-th column with the last column of \(\operatorname{Tr}(GG\T)\) restores the block structure in Theorem \ref{theorem4.1}. Hence, the existing proof method may be adapted.
\end{remark}

Next, we present a construction of additive MDS ACD code from additive GTRS code.

\begin{theorem}
\label{theorem4.2}
Let $q$ be a prime power and $n,k,t$ and $r \geq 2$ be positive integers such that $n \le |\mathfrak{B}|$, $3 \leq k < nr$, $1 \leq t \leq nr-k$, $h =k-1$, $\eta \in \mathbb{F}_q^*$.
Let $k=ar+b$, $t=ur+v$, where $0 \leq b,v \leq r-1$, and let $S$ be as in Theorem \ref{theorem3.2}.
Let $\boldsymbol{\alpha}=(\alpha_1,\alpha_2,\dots,\alpha_n)\in \mathbb{F}_{q^r}^n$, where no two elements among $\alpha_1,\alpha_2,\dots,\alpha_n$ form a conjugate pair over $\mathbb{F}_q$ and each $\alpha_i$ has exactly $r$ distinct conjugates over $\mathbb{F}_q$.
Denote $T_m=\sum\limits_{i=1}^n \operatorname{Tr}(\alpha_i^m)\in \mathbb{F}_q$, $A=
\big(T_{i+j}\big)_{0\le i,j\le k-2}$ and suppose $\det(A) \neq 0$, $\boldsymbol{w} = \big(\, T_{k-1}+\eta T_{k-1+t},\; T_{k}+\eta T_{k+t},\;\dots,\; T_{2k-3}+\eta T_{2k-3+t} \,\big)\T$ and $\Delta=T_{2k-2}+2\eta T_{2k-2+t}+\eta^2 T_{2k-2+2t}-\boldsymbol{w}\T A^{-1}\boldsymbol{w}$.
If $(\eta,\boldsymbol{\alpha}) \notin \Lambda$ and $\Delta \neq 0$, then the code $\mathscr{T}_k(\boldsymbol{\alpha},\boldsymbol{1},t,k-1,\eta)$ is an additive MDS ACD code, where
\[
\Lambda =\bigcup_{s\in S}\Lambda _s=\bigcup_{s\in S}\ \bigcup_{\substack{I \subseteq \{1, \dots, n\}\\ |I| = a+s}} \left\{(\eta, \boldsymbol{\alpha}) : P_s(I) \text{ holds and } \eta^{-1} = \sum_{i_1+i_2=h}c_{i_1}f_{i_2}\in \mathbb{F}_q^*\right\},
\]
with $M=(a+s)r$ and $0 \leq i_1 \leq m$, $0 \leq i_2 \leq M$, where $M$, $m$ and \(P_s(I)\) are as in the Notation of Section 3.
When $S=\varnothing$, we have $\Lambda=\varnothing$, and the code
$\mathscr{T}_k(\boldsymbol{\alpha},\boldsymbol{1},t,k-1,\eta)$ is an additive MDS ACD code for every
$\eta\in\mathbb{F}_q^*$ with $\Delta\neq 0$.
\end{theorem}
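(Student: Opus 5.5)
The proof combines the two characterizations already established: Theorem \ref{theorem3.2} for the MDS property and Theorem \ref{theorem4.1} for the ACD property. Both are stated for the same code $\mathscr{T}_k(\boldsymbol{\alpha},\boldsymbol{1},t,k-1,\eta)$ with $\boldsymbol{v}=\boldsymbol{1}$, under the same hypotheses on $\boldsymbol{\alpha}$. So the plan is to check that the hypotheses of the present statement imply the hypotheses of each of those theorems, and then conjoin their conclusions.

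First, the MDS part. The parameters satisfy $3\le k<nr$, $1\le t\le nr-k$, $h=k-1\in\{0,\dots,k-1\}$ and $\eta\in\mathbb{F}_q^*$. The evaluation points lie in pairwise distinct full-degree Frobenius orbits, and $n\le|\mathfrak{B}|$. These are exactly the hypotheses of Theorem \ref{theorem3.2}, specialized to $h=k-1$. The set $\Lambda$ in the present statement is literally the set $\Lambda$ of Theorem \ref{theorem3.2} with $h=k-1$, built from the same $S$, $M$, $m$, $c_i$, $f_i$ and $P_s(I)$ of the Notation. Hence $(\eta,\boldsymbol{\alpha})\notin\Lambda$ gives, by the ``if'' direction of Theorem \ref{theorem3.2}, that the code is additive MDS. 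When $S=\varnothing$, that theorem already gives $\Lambda=\varnothing$ and the MDS property for all $\eta\in\mathbb{F}_q^*$. This yields the final sentence, provided the ACD part needs only $\Delta\neq 0$.

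Second, the ACD part. The data $T_m$, $A$, $\boldsymbol{w}$ and $\Delta$ are defined identically to Theorem \ref{theorem4.1}. The standing assumption $\det(A)\neq 0$ is also present, and the parameter range $3\le k\le nr-1$ matches. Theorem \ref{theorem4.1} then shows that $\Delta\neq 0$ is equivalent to the ACD property; concretely, it follows from $\det\operatorname{Tr}(GG\T)=\det(A)\,\Delta\neq 0$ together with Lemma \ref{lemma2.2}. Combining the two parts, the code is simultaneously additive MDS and ACD.

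There is no real obstacle here; the only point needing care is bookkeeping. One must confirm that the ranges $k<nr$ and $k\le nr-1$ coincide, that $h=k-1$ is admissible in Theorem \ref{theorem3.2}, and that the MDS condition does not interact with the ACD condition. The two conditions are indeed independent: one concerns zeros of twisted polynomials on subsets $I$, the other the Gram matrix $\operatorname{Tr}(GG\T)$. So the conjunction is immediate, and no existence claim is asserted that would require the counting argument of Lemma \ref{lemma4.1}.
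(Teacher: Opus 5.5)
Your proposal is correct and follows the paper's proof exactly. The MDS property comes from Theorem~\ref{theorem3.2} specialized to $h=k-1$ using $(\eta,\boldsymbol{\alpha})\notin\Lambda$, and the ACD property comes from Theorem~\ref{theorem4.1} via $\det\operatorname{Tr}(GG\T)=\det(A)\,\Delta\neq 0$; your hypothesis bookkeeping ($k<nr$ versus $k\le nr-1$, admissibility of $h=k-1$, the case $S=\varnothing$) is what the paper's one-line proof leaves implicit.
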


\begin{proof}
The MDS property follows from Theorem \ref{theorem3.2}, while the ACD property follows from Theorem \ref{theorem4.1} because $\det(A)\neq 0$ and $\Delta\neq 0$.
\end{proof}

\begin{example}
\label{example4.2}
Let $t=3$, $q = 25$, $r = 2$, $n = 10$, $k = 9$ and $h = 8$. Let $z$ be a primitive element of $\mathbb{F}_{25}$, and let $g$ be a primitive element of $\mathbb{F}_{625}$.
Let $\boldsymbol{\alpha} = (g^{287}, g^{235}, g^{80}, g^{20}, g^{394}, g^{295}, g^{147}, g^{397},g^{62}, g^{58}) $ $\in \mathbb{F}_{625}^{10}$. Using Magma, we have $\Lambda  =  \Lambda _1$,
where $\Lambda _1$ denotes
\[
f_8 = f_9^2,
\]
\[
\eta^{-1} \in \big\{ f_7 - f_8 \cdot f_9 \,\big\vert\, I \subseteq \{1,2,\dots,10\},\ |I|=5 \big\}
= \big\{1, z^3 ,z^6,z^8,z^{18},z^{21},z^{23}\big\},
\]
then $\det(A)=z^{19} \ne 0$.
Using Magma, we have $\Delta(\eta) = z^{17} \eta^2 + z^{9}$ and $\Delta(\eta)=0$ when $\eta=z^2$ or $\eta=z^{14}$.
Then let $\eta = z^{12} \in \mathbb{F}_{25}$ and $\eta^{-1}=z^{12}$, i.e. $(\eta,\boldsymbol{\alpha}) \notin \Lambda $ and $\Delta(\eta) \ne 0$.
From Theorem 4.2 we know that $\mathscr{T}_9(\boldsymbol{\alpha},3,8,z^{12})$ is an additive $[10,9,6]$-code over $\mathbb{F}_{625}$, and hence it is an additive MDS ACD code.
\end{example}

\begin{remark}
\label{remark4.3}
Similar to Remark \ref{remark4.2}, by applying the same row and column permutations to the matrix, the proof method of Theorem 4.2 can be adapted for general \(h\).
\end{remark}

\section{Conclusion}
In this paper, we have constructed new additive non-RS MDS codes using twisted Reed-Solomon codes, and we have established necessary and sufficient conditions for them to be MDS.
We have also obtained a family of additive GTRS codes whose duals are still additive GTRS codes, and have established a construction of ACD codes by virtue of these GTRS codes.

\end{document}